\documentclass[letterpaper,twocolumn,10pt]{article}
\usepackage{usenix}
\usepackage[linesnumbered,ruled,vlined]{algorithm2e}
\SetKwInput{Input}{Input}
\SetKwInput{Output}{Output}
\usepackage{amsmath,amssymb,amsfonts}
\usepackage{tikz}
\usepackage{amsmath}
\usepackage{multirow}
\usepackage{array} 
\newcolumntype{P}[1]{>{\centering\arraybackslash}p{#1}}
\usepackage[skip=10pt]{caption}
\usepackage{graphicx}
\usepackage{subcaption}
\usepackage{booktabs} 
\usepackage{hyperref}
\usepackage[capitalize]{cleveref} 
\usepackage{xspace}
\usepackage[table]{xcolor} 
\newcommand{\bluetri}[1]{\textcolor{blue}{\(\triangleright\)~#1}}
\usepackage{xurl}
\usepackage{pifont}
\usepackage{amsthm}
\newtheorem{proposition}{Proposition}
\newtheorem{assumption}{Assumption}
\newtheorem{lemma}{Lemma}
\usepackage[most]{tcolorbox}
\usepackage{xcolor}

\newtcolorbox{takeawaybox}[1]{
    colback=gray!12,
    colframe=gray!12,
    boxrule=0pt,
    borderline west={4pt}{0pt}{gray!55},  
    sharp corners,
    boxsep=3pt,                
    left=7pt,                  
    right=5pt,                 
    top=4pt,                   
    bottom=4pt,                
    enhanced,
}

\newcommand{\cmark}{\ding{51}}
\newcommand{\xmark}{\ding{55}}
\newcommand{\high}{\textbf{High}}
\newcommand{\med}{Med.}
\newcommand{\low}{Low}

\newcommand{\method}{\textsc{MarkNull}\xspace}
\newcommand{\methodA}{\textsc{MarkNull-A}\xspace}

\crefname{assumption}{assumption}{assumptions}
\Crefname{assumption}{Assumption}{Assumptions}
\usepackage{filecontents}
\usepackage{booktabs}

\usepackage[available]{usenixbadges}
  
\begin{document}

\date{}

\title{\Large \bf \method: Model-Agnostic Watermark Removal in AI-Generated Images via On-Manifold Latent Manipulation}


\author{
{\rm Jie Cao$^1$, Qi Li$^1$, Zelin Zhang$^1$, Xiaodong Wu$^1$, Lingshuang Liu$^2$, Xiangman Li$^1$, Jianbing Ni$^{1,*}$}\\
$^1$Queen's University, Canada~~~~~
$^2$University of Waterloo, Canada\\
}

\maketitle
\pagestyle{empty}

\begin{abstract}

Digital watermarking has emerged as a critical technique for provenance and copyright attribution in AI-generated imagery, yet its robustness against realistic, model-agnostic removal attacks remains poorly explored. Existing attacks either succeed only against specific generative models or achieve removal at the cost of severe visual degradation.
In this paper, we propose \method, a \textit{model-agnostic} watermark removal attack via on-manifold latent manipulation. \method is grounded in a key observation: watermarked images exhibit a strong statistical dependency between the generated latent representation and the embedded initial noise. To quantify this dependency, we introduce the Noise-Latent Alignment Score (NLAS) and formulate an optimization objective that selectively decorrelates the latent representation from the embedded watermark while preserving semantic fidelity. 
Extensive evaluations across different categories of watermarking paradigms, including post-hoc, fine-tuning-based, and initial-noise-based schemes, demonstrate that \method reduces average bit accuracy to 53.14\%, approaching random-guessing (50\%), without perceptible image degradation. To further improve scalability, we propose \methodA, an amortized, optimization-free variant that distills the attack into a single forward pass, achieving 0.50 s/image with modest computational overhead. Notably, our attacks successfully compromise Google’s SynthID-Image system while preserving high visual quality and transfer effectively to video watermarking. Finally, we present an attack detection mechanism as a defensive counterpart to \method and \methodA, highlighting the necessity of developing watermark designs resilient to model-agnostic latent-space attacks.
\end{abstract}

\section{Introduction}

\begingroup
\renewcommand\thefootnote{}
\footnotetext{* Corresponding author.}
\endgroup

Generative AI, particularly Text-to-Image (T2I) models, has fundamentally transformed content creation by enabling the synthesis of highly realistic and semantically rich images. As AI-generated content becomes increasingly indistinguishable from natural imagery, it amplifies risks such as misinformation, deepfakes, and intellectual property infringement. To mitigate the misuse of generative models and synthetic media, digital watermarking has emerged as a key mechanism for copyright verification and content attribution in AI-generated imagery~\cite{cao2025secure}. As an active defense strategy, watermarking embeds imperceptible signals into generated images to support reliable provenance tracking and the detection of unauthorized usage. A prominent real-world deployment is DeepMind’s SynthID~\cite{gowal2025synthid}, a commercial watermarking system designed to embed invisible watermarks into AI-generated images for robust copyright attribution.

\begin{figure}[t]
    \centering
    \includegraphics[width=1\linewidth]{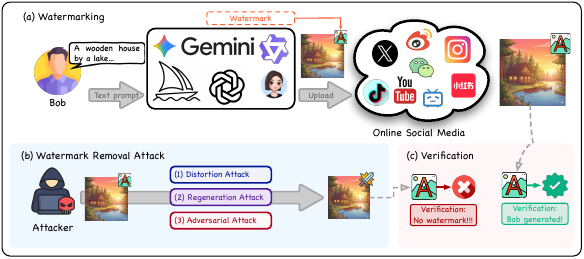}
    \caption{Overview of AI-generated image watermarking and verification, and representative removal attacks.}
    \label{fig:background}
\end{figure}

Existing watermarking techniques can be broadly classified into post-hoc and in-generation approaches. Post-hoc methods, such as DwtDctSvd~\cite{navas2008dwt}, HiDDeN~\cite{zhu2018hidden}, RivaGan~\cite{zhang2019robust}, and SynthID~\cite{gowal2025synthid}, embed watermarks directly into images after generation, including both AI-generated and real-world images~\cite{lu2024robust}. These techniques operate purely in the image domain and remain largely decorrelated from the generative model’s internal representations, limiting their robustness and reliability under realistic watermark removal attacks. To address these limitations, recent research has shifted toward in-generation watermarking, which integrates watermark embedding directly into the image generation process. By aligning watermarks with the model’s semantic and latent representations, this paradigm improves coherence and enables more reliable provenance tracking. In-generation methods can be further divided into two categories: (i) fine-tuning-based watermarking, which modifies model parameters to condition watermark presence in the generated output~\cite{fernandez2023stable,kim2024wouaf}; and (ii) initial-noise-based watermarking, which encodes watermarks into the sampling initialization of diffusion models~\cite{yang2024gaussian,wen2023tree,gunnPRC}. An overview of these paradigms is illustrated in \Cref{fig:background}.

The central challenge of image watermarking lies in robustness, i.e., the ability of watermarks to remain verifiable under distortions and malicious manipulations. While many state-of-the-art (SOTA) image watermarking methods focus on improving watermark verifiability against removal attacks, their security assessment is mainly conducted under white-box assumptions, with a particular focus on pixel-space watermarking. Therefore, critical \textbf{research gaps} remain insufficiently addressed. \ding{182} \textbf{Fidelity-removal Trade-off.} Strong watermark removal attacks can significantly reduce watermark detectability, but often introduce severe artifacts and semantic inconsistencies; in contrast, fidelity-preserving manipulation tends to be too weak to reliably evade watermark verification. For example, NFPA~\cite{qiufuture} removes watermarks via next-frame prediction, but induces object misalignment, making it difficult to preserve visual fidelity and content consistency before and after watermark removal. \ding{183} \textbf{Limited Model-agnosticism.} Existing watermark removal attacks exhibit poor generalization across watermarking schemes. Most are built upon strong assumptions about the embedding mechanism, effectively overfitting to specific watermark patterns (e.g., high-frequency artifacts in the pixel domain). These methods transfer poorly to unseen or SOTA watermarking protocols. For example, UnMarker~\cite{kassis2025unmarker} claims universality but operates exclusively in pixel space via adversarial filtering, rendering it ineffective against initial-noise-based watermarking schemes that embed watermark signals directly within the generative process. \ding{184} \textbf{Heavy Computational Overhead.} SOTA removal attacks such as Imprint~\cite{muller2025black} and CtrlRegen$+$~\cite{liu2024image}, which target advanced watermarking schemes~\cite{wen2023tree,yang2024gaussian}, incur substantial computational overhead due to repeated DDIM inversions. This reliance on iterative inversion not only introduces prohibitive latency that limits real-time deployment, but also accumulates reconstruction errors, resulting in noticeable visual degradation. Achieving a high-fidelity \textit{single-step} DDIM-based attack remains an open problem, as existing methods fail to learn a direct end-to-end restoration that can effectively decorrelate the invisible watermark from complex semantic content without iterative refinement.

\textbf{Contributions}. To address the above limitations, we propose \method, a model-agnostic watermark removal framework built on three key ideas: \ding{182} We introduce the Noise-Latent Alignment Score (NLAS) to measure and weaken the dependency between the generated latent representation and the initial noise, enabling watermark removal through constrained latent-space decorrelation while preserving image quality. \ding{183} We design \method for a \textit{no-box} setting, using only a public proxy model and no access to the target watermarking system, which supports transfer across different watermarking paradigms. \ding{184} We further develop \methodA, an amortized variant that distills the optimization objective into a Watermark Removal Network (WRN), replacing iterative optimization with a single feed-forward pass for more efficient removal. Specifically, the main contributions of this paper include:

\begin{itemize}
    \item We propose \method, a no-box, model-agnostic watermark removal attack based on NLAS. Across nine representative image watermarking schemes, \method reduces the average BA to 53.14\%, close to random guessing, while achieving the best overall visual quality among attack baselines. We further provide a theoretical analysis from the perspectives of flow matching and ODE theory to justify the removability of diffusion-model watermarks and the validity of NLAS.

    \item We develop \methodA, an amortized variant of \method that distills the latent manipulation objective into a WRN. \methodA removes the need for iterative optimization and enables efficient watermark removal, achieving 0.5 seconds per image with 6,282 MB of VRAM.
    
    \item We further evaluate \method and \methodA against SynthID-Image, achieving a 100\% ASR while maintaining superior perceptual quality, and extend evaluation to video watermarking schemes VideoShield and VideoMark, where \method suppresses BA to $\sim50\%$ and \methodA achieves excellent video quality among successful attack baselines (MUSIQ $>0.6$).
    
     \item We explore an attack detector as a defensive mechanism against \method and \methodA, demonstrating its potential to reliably identify compromised content. These findings further underscore the pressing need for watermarking schemes designed to be intrinsically robust against latent-space manipulation.

\end{itemize}

\section{Related Work}

\begin{table*}[t]
\centering
\caption{\textbf{Comparison of watermark removal attacks.} We categorize prior attacks by methodology and compare them in terms of proxy-model requirement and key capabilities: optimization-free execution, model-agnosticism, attack speed (seconds per image), fidelity preservation, adversary's knowledge and attackable modalities.}
\label{tab:related_work}
 \resizebox{\textwidth}{!}{%
\begin{tabular}{l l c c c c c c c c}
\toprule

&\multirow{2.5}{*}{\textbf{Method}} &\multirow{2.5}{*}{\textbf{Source}} &\multirow{2.5}{*}{\textbf{Proxy Model}}& \multicolumn{6}{c}{\textbf{Attack Capabilities}} \\
 \cmidrule(lr){5-10}
&   & & & \textbf{Opt.-free} & \textbf{Model-agn.} & \textbf{Speed} & \textbf{Fidelity} & \textbf{Knowledge} &\textbf{Modality} \\
\midrule

\multirow{3}{*}{\shortstack[l]{Regeneration}} 
 & VA \cite{an2024waves} &ICML'24& \cmark (VAE) & \cmark & \xmark & $0\sim1$  & \high & No box & Image\\
 & DA \cite{zhao2024invisible} &NeurIPS'24& \xmark (SDM) & \cmark & \xmark & $0\sim1$ & \high & No box & Image\\
  & NFPA \cite{qiufuture} &NeurIPS'25& \cmark (SDM) & \cmark & \xmark & $1\sim5$ & \low & No box & Image\\
\midrule

\multirow{6}{*}{\shortstack[l]{Adv. Attack}} 
 & Imprint \cite{muller2025black} &CVPR'25 & \cmark (SDM) & \xmark & \xmark & $10\sim50$ & \med & No box & Image\\
  & CtrlRegen+ \cite{liu2024image} &ICLR'25 & \cmark (SDM)& \xmark & \xmark & $5\sim10$ & \med & No box & Image\\
 & UnMarker \cite{kassis2025unmarker} &S\&P'25 & \xmark & \xmark  & \xmark & $>600$ & \med & No box & Image\\
 & Noiseprints Attack (NPA) \cite{goren2025noiseprints} &$-$ &\cmark (SDM)& \xmark & \xmark & $10\sim50$ & \med & No box & Image\\

\cmidrule(lr){2-10} 
 & \textbf{\method} &$-$ & \cmark (SDM) & \xmark & \cmark & $5\sim10$ & \textbf{High} & \textbf{No box} & \textbf{Image\&Video}\\
 & \textbf{\methodA} &$-$ & \cmark (SDM)& \cmark & \cmark & $0\sim1$ & \textbf{High} & \textbf{No box}  & \textbf{Image\&Video}\\
\bottomrule
\end{tabular}}
\end{table*}

\subsection{AI-generated Image Watermarking}
Digital watermarking has progressed from traditional pixel- and frequency-domain methods~\cite{van1994digital,al2007combined,navas2008dwt,wan2022comprehensive} to deep learning-based frameworks~\cite{zhu2018hidden,zhang2019robust}, serving as a key tool for provenance tracing and copyright protection. While conventional post-hoc methods watermark existing images, the rise of AIGC has shifted attention toward in-generation watermarking, which embeds signals during synthesis for improved stealthiness and robustness. Accordingly, T2I watermarking methods can be broadly categorized as follows:

\textbf{Fine-tuning-based watermarking:} Fine-tuning-based watermarking embeds watermarks by modifying generative model parameters, making the watermark an intrinsic property of the output distribution. Representative approaches operate on different components: Stable Signature~\cite{fernandez2023stable} fine-tunes the Variational Autoencoder (VAE) decoder to embed a fixed model-level watermark. Xiong et al.~\cite{xiong2023flexible} and WOUAF~\cite{kim2024wouaf} extend this to user-level attribution via learnable latent injection modules and weight modulation, respectively. SleeperMark~\cite{wang2024sleepermark} instead fine-tunes the U-Net with a trigger-based mechanism, enabling covert watermark activation conditioned on specific prompts.
    
\textbf{Initial-noise-based watermarking:} Initial-noise-based watermarking advances in-generation techniques by embedding watermarks directly into the earliest stage of the generation process. By exploiting the structure of Latent Diffusion Models (LDMs), this paradigm freezes model parameters and subtly modifies the initial Gaussian noise, allowing the watermark to propagate through the denoising trajectory and persist in the final output. TreeRing~\cite{wen2023tree} is a pioneering method that embeds a watermark signal guided by concentric ring patterns into the Fourier domain of the initial noise. Subsequent works enhance security by incorporating cryptographic primitives into the noise sampling process. Gaussian Shading~\cite{yang2024gaussian} partitions the Gaussian probability space to encode watermark bits, and T2SMark~\cite{yang2026t2smark} further improves key management and generation diversity. Similarly, Gunn et al.~\cite{gunnPRC} employed Pseudorandom Error-correcting Codes (PRC) to modulate noise signs via watermark-derived sequences, providing strong computational indistinguishability guarantees.

\subsection{Watermark Removal Attacks}
The widespread dissemination of generative content has intensified research on watermark removal attacks, which seek to undermine verification and attribution by manipulating watermarked images~\cite{zhao2024sok,cao2025secure}. While early attacks relied on simple distortions with limited effectiveness, modern adversaries employ sophisticated techniques capable of suppressing or eliminating watermarks while preserving perceptual fidelity. Existing removal strategies fall into three paradigms~\cite{an2024waves,kassis2025unmarker,qiufuture}: distortion attacks, regeneration attacks, and adversarial attacks, with the latter two representing the most effective approaches; representative methods are summarized in \Cref{tab:related_work}.

\textbf{Distortion attacks} are commonly used to assess the baseline robustness of watermarking schemes~\cite{fernandez2022watermarking,kim2024wouaf,wang2024sleepermark,wen2023tree,yang2024gaussian}. They apply standard signal-processing operations, such as geometric transformations and JPEG compression, to degrade watermark signals and reduce detection confidence. Although such attacks can weaken watermarks by introducing irreversible distortions or suppressing detection-relevant high-frequency components, their overall effectiveness is often limited.

\textbf{Regeneration attacks} aim to remove watermarks by reconstructing images using non-watermarked generative models. For example, Zhao et al.~\cite{zhao2024invisible} exploited the information bottleneck of VAEs and LDMs to filter out watermark signals during latent reconstruction. However, this approach relies on lossy encoding and decoding processes, which frequently introduce visual artifacts such as blurring and detail loss. Qiu et al.~\cite{qiufuture} proposed transforming watermarked images into semantically coherent \textit{future frames} using an unwatermarked generative model. By optimizing constrained latent warping and applying frame-attention mechanisms to preserve semantic content, their method successfully removes initial-noise-based watermarks in a black-box setting, albeit with potential fidelity trade-offs.

\textbf{Adversarial attacks} constitute one of the most extensively studied classes of watermark removal methods. These attacks optimize carefully crafted, imperceptible perturbations to mislead watermark detectors while preserving visual quality, typically formulated as constrained optimization problems. They are effective under varying adversarial knowledge assumptions and have demonstrated strong performance in bypassing detection systems. Hu et al.~\cite{hu2024transfer} studied the no-box watermark evasion via transferable adversarial perturbations generated from surrogate model ensembles. MarkSweep~\cite{cao2026marksweep} likewise targets the no-box setting, using a noise intensification–denoising pipeline to amplify and suppress imperceptible watermark signals. UnMarker~\cite{kassis2025unmarker} proposes a universal removal framework by optimizing Fourier-domain losses or applying adversarial filtering to alter structural image features.

Nevertheless, these adversarial approaches generally fail against initial-noise-based watermarking methods~\cite{wen2023tree,yang2024gaussian,gunnPRC,yang2026t2smark} and other highly robust watermarking schemes~\cite{wang2024sleepermark,lu2024robust}. Recent works have begun exploiting the generative priors of latent diffusion models for watermark removal under a no-box setting. Zhang et al.~\cite{lee2025removal} and M{\"{u}}ller et al.~\cite{muller2025black} optimized latent perturbations to cross detection boundaries with minimal visual distortion, while Liu et al.~\cite{liu2024image} guided controllable regeneration from noise using semantic and spatial constraints. Although these approaches differ in methodology, they collectively reveal a growing trend: leveraging the structure and dynamics of the diffusion process itself to suppress or remove embedded watermarks.

\section{Background}

\subsection{LDM-Based Image Generation}\label{sec:ldm}
We formulate our method within the framework of LDMs~\cite{rombach2022high}, a class of generative models that has become foundational in diffusion-based image synthesis. Specifically, an LDM is composed of two principal components: (i) A diffusion model $\mathcal{U}$ based on U-Net, which governs the stochastic denoising trajectory in latent space; and (ii) a VAE, typically decomposed into an encoder–decoder pair $(\mathcal{E},\mathcal{D})$, where $\mathcal{E}$ maps input images into compact latent representations, and $\mathcal{D}$ reconstructs the corresponding images from those latents. The full LDM is denoted as $\Theta = (\mathcal{E},\mathcal{U},\mathcal{D})$. The image generation process begins by sampling a noise vector $z_T$ from a standard Gaussian distribution. This latent is then progressively denoised to $z_0$, using the DDIM~\cite{song2020denoising} or DPM-Solver$++$~\cite{lu2025dpm,hong2024exact}, and finally decoded into the image space via the decoder $\mathcal{D}$:
\begin{equation}
   I = \mathcal{D} \left( \mathsf{DDIM}_{T \rightarrow 0}(z_T;\, \mathcal{U}) \right),\quad z_T \sim \mathcal{N}(\mathbf{0}, \mathbf{I}).\
\end{equation}

The inverse DDIM~\cite{mokady2023null,muller2025black} procedure reconstructs an image’s generative trajectory by running the deterministic reverse diffusion dynamics backward, which can be interpreted as a discretized ODE/flow trajectory under the probability-flow ODE interpretation~\cite{song2020denoising,songscore}. This yields an approximately invertible mapping between $z_0$ and $z_T$. At each timestep $t$, the latent is updated by:
\begin{equation}
z_{t+1} = \sqrt{\alpha_{t+1}}\, \hat{z_0}^t + \sqrt{1 - \alpha_{t+1}}\, \mathcal{U}(z_t, t, y),    
\end{equation}
where $\alpha_t$ denotes the noise schedule and $\hat{z_0}^t$ is the model’s estimate of the original latent. This formulation inverts the generative process without requiring access to the original condition $y$~\cite{mokady2023null}, thereby enabling reconstruction of the noise trajectory that could have produced $z_0$. This inversion underlies the basic principle of initial-noise-based watermarking methods~\cite{wen2023tree,yang2024gaussian,gunnPRC,yang2026t2smark}, where the watermark is embedded by modifying the $z_T$ and is then faithfully propagated through the generative process. We denote this inversion process as:
\begin{equation}
    z_{T} =\mathsf{DDIM}_{0\rightarrow T}(z_0;\, \mathcal{U}),\quad z_0=\mathcal{E}(I).
\end{equation}


 


\subsection{Formulation of Image Watermarking}

The embedding processes of \textbf{post-hoc watermarking} and \textbf{fine-tuning-based Watermarking} can be defined as superimposing a specific noise signal $\delta$ onto the original cover image $I$ to obtain the watermarked image $I_w$. This process can be formalized as:
\begin{equation}\label{eq:add_wm}
\begin{aligned}
    I_w = \Theta(z_T,y) + \delta, 
\end{aligned}    
\end{equation}
where $\delta$ denotes the watermark signal (or watermark residual). For post-hoc methods, $\delta$ is explicitly produced by the watermark embedding function; for fine-tuning-based methods, $\delta$ corresponds to the implicit output shift induced by parameter updates, i.e., $\Theta^w(z_T,y)-\Theta(z_T,y)$.

\textbf{Initial-noise-based watermarking} leverages the high sensitivity of diffusion models to initial conditions and the properties of DDIM, allowing the watermark embedded in the initial noise to propagate to the final generated image. This process can be defined as follows:
\begin{equation}
    I_w = \Theta(z_T^{w},y), \quad z_T^{w} \sim \mathcal{N}(\mathbf{0},\mathbf{I}),
\end{equation}
where $z_T^w$ is constructed in the latent space based on a pre-defined watermark $w$ using a specific embedding algorithm.

Existing watermarking methods generally fall into three paradigms, yet they share a common verification function. Formally, we define a unified detector $f:\mathcal{I}\to\mathbb{R}$ that consists of both watermark extraction and verification. For an input image, it outputs a scalar score compared against a threshold $\tau$ to determine watermark presence.

\section{Threat Model}

In image watermarking, \textbf{a watermark owner} (e.g., image creator, model provider, or copyright holder) employs a watermarking system to embed imperceptible signals into AI-generated images for provenance tracking and copyright enforcement, while \textbf{an adversary} seeks to evade verification by removing the watermark from the watermarked image while maintaining high visual fidelity of the image. We formalize watermark removal as an evasion attack under a \textit{no-box} setting, and detail the adversary’s knowledge and objectives.

    
\subsection{Adversary Knowledge}
The generative framework employed by the watermark owner typically relies on LDMs \cite{rombach2022high}, such as Stable Diffusion \cite{stabilityai_sd1.5,stabilityai_sd21base_2022,stabilityai_sdxl}, to ensure the synthesis of high-fidelity and semantically rich images. Depending on the specific provenance tracking strategy, the backbone model is either utilized as a fixed generator in post-hoc and initial-noise-based schemes or subjected to parameter modification in fine-tuning-based approaches. In this work, we consider a realistic and challenging \textit{model-agnostic} threat model, in which the adversary aims to remove potential watermarks without relying on any prior knowledge of the specific watermark embedding mechanisms. The adversary’s capabilities are defined as follows:
\begin{itemize}

\item \textbf{No-box Target Access:} The adversary can only generate watermarked images from the target generator $\Theta$, but has no access to its internal parameters, gradients, or specific architecture (e.g., the decoder or the extractor).

\item \textbf{No-Knowledge of Watermarking Scheme:} We assume a strictly no-knowledge setting regarding the target watermarking scheme. The adversary is unaware of the specific paradigm, embedding, extraction, and verification algorithms, or the embedded watermark $w$. Furthermore, we assume a \textit{no-oracle setting} in which the adversary has no access to the watermark detection interface and cannot query the verifier for scores or gradients.

\item \textbf{Data Constraints:} We assume that the adversary lacks access to ground-truth paired samples (i.e., a pair of watermarked and non-watermarked images depicting the same content or generated from the same prompt).

\item \textbf{Proxy Model Access:} The adversary has full white-box access to a publicly available proxy model $\Theta^{A}$, which is assumed to share a similar generative domain with the image generation model $\Theta$, but is clean.

\end{itemize}   

\subsection{Attack Objectives} 
The adversary aims to achieve effective watermark removal while maintaining high visual fidelity. We formalize these two competing objectives as follows:

\noindent \textbf{Watermark Evasion.} To compromise the target watermarking system, the attack goal is to suppress the detection score below the decision threshold, thereby causing a false negative:
\begin{equation}
f(I_A) < \tau,
\end{equation}
where $I_A$ denotes the adversarial image.

\noindent \textbf{Visual Stealthiness.} The attacked image $I_A$ after watermark removal must remain perceptually indistinguishable from the original watermarked image $I_w$ to ensure utility. We require:
\begin{equation}
d_{\text{img}}(I_A, I_w) \le \eta,
\end{equation}
where $d_{\text{img}}$ represents a perceptual distance metric (e.g., LPIPS~\cite{zhang2018unreasonable_LPIPS} or MSE), and $\eta$ is an acceptable distortion budget.

\noindent \textbf{Generality and Practicality.} 
The proposed attack is model-agnostic and designed to generalize across diverse watermarking paradigms and real-world deployment scenarios. Unlike prior approaches that rely on exploiting scheme-specific embedding artifacts, our attack leverages universal properties of LDMs, enabling consistent effectiveness against post-hoc, fine-tuning-based, and initial-noise-based watermarking schemes. Moreover, the attack assumes no access to the target model or watermarking mechanism and must successfully transfer from a publicly available proxy model to unknown or proprietary systems without modification.
Meanwhile, we emphasize practicality and stealth. The attack is required to operate under realistic resource constraints, achieving low latency and modest memory overhead suitable for consumer-grade hardware, while preserving high perceptual quality to avoid raising suspicion during watermark removal.

\section{Proposed Attacks}

\begin{figure}[t]
    \centering
    \includegraphics[width=\linewidth]{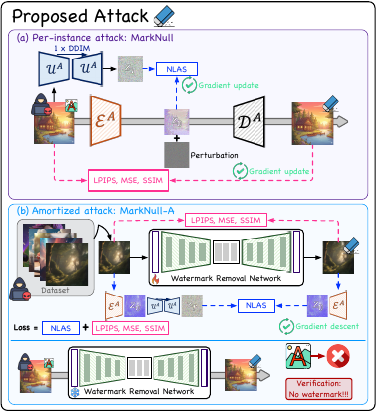}
    \caption{Illustration of the attack pipeline. \method iteratively optimizes latent representations to evade detection; \methodA trains a WRN for rapid end-to-end removal. Both modes leverage a proxy model $\Theta^A$.}
    \label{fig:overview}
\end{figure}

\subsection{Overview}
We propose \method and its end-to-end version \methodA, as illustrated in~\Cref{fig:overview}. \method is a per-instance optimization attack that removes watermark through latent space manipulation. It iteratively optimizes a perturbation applied to the image's latent representation $z_0$, balancing two competing objectives: (i) maximizing deviation from the watermarked initial noise distribution to erase embedded watermark signals, and (ii) minimizing perceptual loss to preserve visual fidelity. To enhance attack efficiency, we further introduce \methodA, an amortized version of the attack designed for rapid inference. \methodA trains a WRN to learn a generalized mapping from watermarked image to watermark-free image in the pixel domain, distilling the iterative optimization objective into a single forward pass, while maintaining the hybrid loss constraints on both latent and pixel spaces, achieving a favorable balance between removal effectiveness and image quality.

 \subsection{Key Insights}\label{sec:insights}
  As illustrated in \Cref{sec:ldm}, the watermarked $z_T$ can be transported along the denoising trajectory and manifest in $z_0$ and thus in the decoded image, making $z_T$ a natural carrier for watermark information. Our proposed approach is driven by the fundamental observation that watermarking paradigms, regardless of their implementation, rely on preserving specific statistical dependencies along the generative trajectory ($z_T \to z_0$). We introduce Latent Space Manipulation to disrupt these dependencies through two complementary mechanisms:

  \textbf{Statistical Independence via Orthogonality.} For initial-noise-based watermarking schemes, we observe that simple distance maximization in latent space is insufficient to remove watermark signals. Instead, we enforce geodesic orthogonality between the optimized latent $z_0^\mathrm{adv}$ and the original initial noise $z^{A,w}_T$. In high-dimensional latent spaces, orthogonality implies statistical independence, severing the causal dependency between the generated image and the watermarked initial noise. As a result, the re-inverted noise is mapped to a pristine, unstructured distribution rather than a watermark-encoded pattern, enabling reliable watermark removal.
  
  \textbf{Manifold Projection and VAE Filtering.} For fine-tuning and post-hoc watermarking schemes, our attack performs regeneration. First, the VAE information bottleneck functions as a low-pass filter, suppressing high-frequency, pixel-level artifacts commonly introduced by post-hoc watermark embedding. Second, by optimizing $z_0^\mathrm{adv}$ to minimize perceptual distortion while simultaneously enforcing the orthogonality constraint, we induce a controlled distributional shift: the resulting adversarial image maintains high visual fidelity but is projected away from the model’s specific \textit{watermarked} statistical manifold. This displacement effectively prevents watermark detection while preserving semantic content.

\subsection{\method}

The initial noise $z_T$ leaves a persistent structural trace on the final generation; specifically, the latent representation $z_0$ retains a significantly higher correlation with its originating source $z_T$ compared to independent random noise. Moreover, the $z_T\rightarrow z_0$ trajectory is constrained by the denoiser and the scheduler. The generated states do not densely cover the ambient Euclidean space; instead, they are concentrated near a structured low-dimensional manifold. Thus, we introduce NLAS to measure the correlation between $z_0$ and $z_T$:
\begin{equation}
\mathrm{NLAS}(z_0,z_T) = ( \arccos\left( \frac{z_0 \cdot z_T}{\|z_0\|_2 \|z_T\|_2} \right) - \frac{\pi}{2} )^2.
\end{equation}

\Cref{sec:nlas} validates NLAS as an effective correlation metric: $z_0$ shows high alignment with its coupled $z_T$, whereas it is effectively orthogonal to random standard Gaussian noise $z_{\text{rand}} \sim \mathcal{N}(\mathbf{0}, \mathbf{I})$.

Therefore, we aim to construct an adversarial latent $z_0^{(\mathrm{adv})}$ that is orthogonal to $z_T$, formulated as follows:
\begin{equation}
\begin{aligned}
        &\theta(z_0^{(\mathrm{adv})}, z_T) \to \frac{\pi}{2} \implies \left| \langle \frac{z_0^{(\mathrm{adv})}}{\|z_0^{(\mathrm{adv})}\|}, \frac{z_T}{\|z_T\|} \rangle \right| \approx 0,\\
    \end{aligned}
\end{equation}
where $\langle \cdot,\cdot \rangle$ denotes the inner product. To promote statistical independence under a Gaussian prior, we enforce orthogonality between $z_0$ and $z_T$.

To preserve visual fidelity while optimizing the adversarial $z_0^{(\mathrm{adv})}$, we incorporate an image quality loss term that constrains the decoded adversarial image to remain close to the original watermarked image. Specifically, we define the quality loss as:
\begin{equation}
\begin{aligned}
    \mathcal{L}_{\mathrm{img}}(I_A, I_w)
&= \lambda_1\,\mathcal{L}_\mathrm{LPIPS}(I_A, I_w)
+ \lambda_2\,\|I_A - I_w\|_2^2\\
&+ \lambda_3\,\big(1-\mathcal{L}_\mathrm{SSIM}(I_A, I_w)\big),
\end{aligned}
\end{equation}
where $I_w$ and $I_A$ denote the original watermarked image and the image decoded from the adversarial latent via proxy model, respectively; $\mathcal{L}_\mathrm{LPIPS}$ measures perceptual dissimilarity, $\|\cdot\|_2^2$ is the pixel-wise Mean Squared Error (MSE), and $\mathcal{L}_\mathrm{SSIM}$ quantifies structural similarity. The weights $\{\lambda_1,\lambda_2,\lambda_3\}$ balance the perceptual, pixel-level, and structural fidelity terms. This loss ensures that watermark removal does not produce noticeable image degradation.

Consequently, we aggregate the aforementioned terms into a unified objective function and optimize the adversarial $z_0^{(\mathrm{adv})}$ via gradient descent. Crucially, this optimization process relies exclusively on the proxy model $\Theta^A=(\mathcal{E}^A, \mathcal{U}^A,\mathcal{D}^A)$, adhering strictly to the no-box setting. The optimization procedure is summarized in \Cref{alg:MarkNull}. We name this per-instance attack \method.
\begin{equation}\label{eq:loss1}
\begin{aligned}
   \mathcal{L}_{\method} =  \mathcal{L}_{\mathrm{img}}(I_A,I_w) + \lambda_4\mathrm{NLAS}(z_0^{(\mathrm{adv})},z_T^{A,w}).
\end{aligned}
\end{equation}

\begin{algorithm}[t]
    \caption{\method}
    \label{alg:MarkNull}
    \SetKwInOut{Input}{Input}
    \SetKwInOut{Output}{Output}
    \Input{
        Watermarked image $I_w$, 
        proxy model $\Theta^A$, 
        latent perturbation budget $\epsilon$; 
        steps $N$, learning rate $\kappa$.
    }
    \Output{Watermark-free image $I_A$.}
    \textcolor{blue}{\tcp{1. Initialization}}
    \DontPrintSemicolon
    $z_0^{A,w} \leftarrow \mathcal{E}^A(I_w)$\;
    $z_T^{A,w} \leftarrow \mathsf{DDIM}_{0\rightarrow T}(z_0^{A,w}; \mathcal{U}^A)$\;
     $z_0^{(\mathrm{adv})} \leftarrow z_0^{A,w}$\;
    
    \textcolor{blue}{\tcp{2. Optimization Loop}}

    \For{$i \leftarrow 1$ \KwTo $N$}{
    $I_A^{(i)} \leftarrow \mathcal{D}^A(z_0^{(\mathrm{adv})})$\;

    $g_i \leftarrow \nabla_{z_0^{(\mathrm{adv})}}
    \mathcal{L}_{\method}(z_0^{\rm (adv)},z_T^A,I_A^{(i)}, I_w)$
    \bluetri{\Cref{eq:loss1}}\;

    $z_0^{(\mathrm{adv})} \leftarrow
    z_0^{(\mathrm{adv})} - \kappa g_i$\;

    $z_0^{(\mathrm{adv})} \leftarrow
    z_0^{A,w} +
    \mathsf{Clip}(
        z_0^{(\mathrm{adv})}-z_0^{A,w},
        -\epsilon,\epsilon)$\;}
    
    \textcolor{blue}{\tcp{3. Final Reconstruction}}
    $I_A \leftarrow \mathcal{D}^A(z_0^{(\mathrm{adv})})$\;
    \KwRet{$I_A$}
\end{algorithm}

Furthermore, in \Cref{sec:att_proof}, we provide a theoretical analysis of the feasibility and generality of \method. We view the generation dynamics of LDMs as a deterministic ODE flow driven by a learned time-dependent vector field, which allows us to leverage flow matching and ODE theory to establish the existence of an inverse mapping and thus the removability of watermarks across a broad range of watermarking schemes.

\subsection{\methodA}
While \method serves as a powerful per-instance attack, its reliance on iterative gradient-based optimization introduces substantial computational latency, limiting scalability in real-time or large-scale deployment scenarios. However, as empirically demonstrated, the perturbations learned by \method are not purely random; they exhibit consistent and transferable structural patterns across different samples.


This observation suggests that \textit{\textbf{the transformation required to decorrelate watermark signals from the latent diffusion trajectory is both learnable and generalizable}}. Motivated by this insight, we introduce \methodA, an amortized variant designed to improve attack efficiency. Rather than performing iterative optimization on individual samples, \methodA employs a WRN to amortize the optimization cost through a single feed-forward model. The WRN is trained to learn a direct mapping from watermarked images to their watermark-free counterparts by minimizing the following objective function over a representative training dataset.
\begin{equation}\label{eq:loss2}
\begin{aligned}
    \mathcal{L}_{\methodA} = \mathbb{E}_{I \sim \mathcal{D}} \Big[
    \lambda_5\,\mathcal{L}_{\mathrm{LPIPS}}(I_A, I)
    + \lambda_6\,\| I_A - I \|_2^2& \\
    \quad + \lambda_7\,\big(1-\mathcal{L}_\mathrm{SSIM}(I_A, I)\big) +\lambda_8\,\mathrm{NLAS}(z_0^{(\mathrm{adv})}, z^A_T) \Big].
\end{aligned}
\end{equation}

\subsubsection{WRN}
We build the WRN upon a Restormer~\cite{zamir2022restormer}, a SOTA Transformer architecture designed for high-fidelity image restoration. The Restormer is well-suited to our task for two main reasons: (i) its multi-head transposed self-attention mechanism effectively captures long-range dependencies while maintaining computational efficiency for high-resolution inputs; and (ii) its restoration-focused inductive bias, an encoder–decoder hierarchy with multi-scale feature refinement, closely aligns with our goal of removing the watermark without introducing perceptible distortions. 

\begin{algorithm}[t]
\caption{\methodA}
\label{alg:MarkNull-A}
\DontPrintSemicolon
\KwIn{Dataset $\mathcal D$, proxy model $\Theta^A$, learning rate $\xi$, image $I_w$}
\KwOut{$\mathcal W_\phi$, watermark-free image $I_A$}

Initialize $\phi$\;
\While{not converged}{
    Sample $B\sim\mathcal D$\;
    \textcolor{blue}{\tcp{1. Forward}}
    $z_T^A\leftarrow\mathsf{DDIM}_{0\to T}
    (\mathcal E^A(B);\mathcal U^A)$\;
    $\widetilde B\leftarrow\mathcal W_\phi(B)$\;
    $z_0^{\rm(adv)}\leftarrow\mathcal E^A(\widetilde B)$\;

    \textcolor{blue}{\tcp{2. Gradient Update}}
    $\phi\leftarrow\phi-\xi\nabla_\phi
    \mathcal L_{\methodA}(z_0^{\rm (adv)},z_T^A,\widetilde B,B)$
    \bluetri{\Cref{eq:loss2}}\;
}
\textcolor{blue}{\tcp{3. End-to-end Attack}}
$I_A\leftarrow
\mathcal D^A(\mathcal E^A(\mathcal W_\phi(I_w)))$\;
\KwRet{$\mathcal W_\phi,I_A$}\;
\end{algorithm}

    
    
    
        

To comply with the \textit{Data Constraints}, we construct the training dataset $\mathcal{D}$ utilizing a publicly available proxy model $\Theta^A$. $\mathcal{D}$ consists exclusively of clean AI-generated images $I$ without any embedded watermarks. Accordingly, the WRN $\mathcal{W}_\phi$ is trained to produce an output image $I_A$ from $I$ that preserves high perceptual fidelity while enforcing latent-space decorrelation and orthogonality between the optimized latent representation $z_0^{A}$ and the initial noise $z^A_T$. The training procedure is summarized in \Cref{alg:MarkNull-A}. To further enhance attack efficacy and ensure consistency with the diffusion prior, the output of $\mathcal{W}_\phi$ is subsequently passed through a VAE regeneration using $\Theta^A$.

\section{Provable Attack Guarantees of \method}\label{sec:att_proof}

In this section, we theoretically substantiate the effectiveness of \method. The generation dynamics of an LDM can be viewed as a deterministic ODE flow driven by a learned time-dependent vector field \cite{holderrieth2025introduction}. This perspective enables us to leverage flow matching and ODE theory to formalize the existence of an inverse mapping, thereby establishing the removability of watermarks. Our argument is structured around three key observations: (i) because the generative flow is a diffeomorphism, the induced watermark map admits a unique preimage, ensuring well-posed invertibility; (ii) convergence of the marginal vector fields implies that discrete inversion procedures are asymptotically consistent across model variants; and (iii) a latent-space decorrelation mechanism suppresses alignment-based test statistics, reducing detectability under common alignment metrics.

To be consistent with the LDM literature \cite{rombach2022high}, we use $z_T$ to denote the initial noise and $z_0$ the generated latent representation. We reparameterize time $t \in [0,1]$ such that $X_0 \sim p_{\text{init}}$ corresponds to $z_T$, and $X_1 \sim p_{\text{data}}$ corresponds to $z_0$ following ODE and flow matching.

\subsection{Bijective Structure of Noise-Data Mapping}
In this section, we establish the theoretical validity of altering the initial noise by manipulating the latent representation.

\begin{assumption}[\textbf{Deterministic Marginal Flow Analysis}]~\label{assump:dete_flow} 
We define the generative process as a deterministic ODE flow induced by the \textit{Marginal Vector Field}. Thus, the trajectory satisfies:
\begin{equation}
    \frac{dX_t}{dt} = u_t^{\text{target}}(X_t), \quad X_0 = z_T \implies X_1 = z_0,
    \label{eq:ode_flow}
\end{equation}
where $u_t$ denotes the vector field.
\end{assumption}

This assumption is grounded in the theoretical framework in \cite{holderrieth2025introduction}. This validates the marginalization trick, demonstrating that a marginal vector field governing the target probability path can be constructed by integrating conditional vector fields over the posterior distribution. This establishes the theoretical basis for treating the generation process as a deterministic ODE flow; consequently, deterministic samplers used in practice (e.g., DDIM \cite{song2020denoising}, DPM-Solver \cite{lu2025dpm}) are essentially numerical discretizations of this specific ODE.

\begin{proposition}[\textbf{Bijectivity Implies a Unique Preimage}]
Let $\psi: K \to \psi(K)$ be the time-1 flow map induced by an ODE on the compact set $K$, satisfying $z_0 = \psi(z_T)$. If the vector field $u_t$ is continuously differentiable and uniformly Lipschitz on $K$, then $\psi$ is a diffeomorphism. For any generated latent representation $z_0 \in \psi(K)$, there exists a unique preimage $z_T = \psi^{-1}(z_0)$. Therefore, any non-zero perturbation $\epsilon$ resulting in $z^{\mathrm{(adv)}}_0 = z_0 + \epsilon$ (provided $z^{\mathrm{(adv)}}_0 \in \psi(K)$) necessarily corresponds to a shifted initial state $z^{\mathrm{(adv)}}_T \neq z_T$.

\begin{proof}
The solution to the ODE satisfying the aforementioned regularity conditions defines a diffeomorphism. Injectivity guarantees the uniqueness of the preimage. Furthermore, since the inverse map $\psi^{-1}$ is smooth on the compact image set $\psi(K)$, its Jacobian norm is bounded. According to the Mean Value Theorem, the variation in the initial noise is bounded by:
\begin{equation}
    \|z^{\mathrm{(adv)}}_T  - z_T\| \le L_{\text{inv}}\|\epsilon\|,
\end{equation}
where $\epsilon$ denotes the adversarial perturbation added to $z_0$, and $L_{\text{inv}}$ is the Lipschitz constant of the inverse mapping $\psi^{-1}$. This verifies that perturbations in the latent representation can effectively alter the original watermarked $z_T$.
\end{proof}
\end{proposition}

\subsection{Consistency of Watermark Estimation}\label{sec:consistency}
\method leverages a proxy model $\Theta^{A}$ to infer the watermarked initial noise. A key requirement is that the inversion results produced by the proxy model align with those of the target model.

\begin{assumption}[\textbf{Proxy-Target Compatibility}]\label{assump:pro-tar-comp} 
We assume that the target model $\Theta$ and the proxy model $\Theta^A$ share the same latent-space parameterization and the same discrete inversion scheme $\hat\psi^{-1}$ (i.e., the same noise schedule and solver type). In practice, $\hat\psi$ is $\psi$'s discrete approximation. Moreover, we assume that their vector fields $u_t^\Theta$ and $u_t^{\Theta^A}$ satisfy a uniform Lipschitz condition on a compact set $K$, and that all intermediate states encountered during inversion remain within $K$. For watermarking methods built upon SDM-based T2I models, implementing initial-noise-based watermarking typically relies on a largely standardized choice of noise schedule and solver type, which is widely accepted in the community and does not violate the no-box setting.
\end{assumption}

This assumption is grounded in the fundamental objective of generative modeling. Although $\Theta$ and $\Theta^A$ may differ in architecture or training details, both are optimized to approximate the same underlying ground-truth distribution of natural images, denoted as $p_{data}$. Consequently, they learn to map Gaussian noise to the same low-dimensional natural image manifold. Since the watermark removal process in \method essentially corresponds to a projection operation that pulls the sample back onto this shared manifold (filtering out off-manifold watermark perturbations), the gradient direction learned from the proxy manifold $\mathcal{M}(\Theta^A)$ serves as a highly effective approximation for the target manifold $\mathcal{M}(\Theta)$. This structural alignment ensures attack transferability even across disparate model architectures.

\begin{lemma}\label{lemma}
Under \Cref{assump:pro-tar-comp} and the stability theory of numerical integration (specifically, the discrete Grönwall lemma), the discrepancy in the inversion results is bounded by the error of the vector fields:
\begin{equation}
\sup_{z \in \psi(K)} \| \hat{\psi}^{-1}_{\Theta}(z) - \hat{\psi}^{-1}_{\Theta^A}(z) \| \leq C_{\text{stab}} \sup_{x \in K, t \in [0,1]} \| u^{\Theta}_t(x) - u^{\Theta^A}_t(x) \|.    
\end{equation}

This bound holds for any single-step or multi-step solver satisfying zero-stability and consistency conditions. The constant $C_{\text{stab}}$ depends on the Lipschitz constant, the number of steps, and the time interval.
\end{lemma}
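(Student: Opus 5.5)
The plan is a standard global error (stability) argument for one-step methods, applied to the \emph{inversion} direction of the ODE in \Cref{assump:dete_flow}. Fix the discrete inversion scheme shared by both models (\Cref{assump:pro-tar-comp}). Write it as $N$ steps with sizes $h_n$, $\sum_n h_n = 1$, and increment functions $\Phi_n^{\Theta}$ and $\Phi_n^{\Theta^A}$. For a one-step solver the recursion is
\[
x_{n+1} = x_n - h_n \Phi_n(x_n; u_t).
\]
For DDIM inversion, $\Phi_n$ is built from evaluations of $u_t$ at the current state. Start both models from the same point $z \in \psi(K)$ and define the error $e_n = x_n^{\Theta} - x_n^{\Theta^A}$, so that $e_0 = 0$. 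The quantity to bound is $\|e_N\|$, taken uniformly in $z$.

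The one-step estimate comes from adding and subtracting $\Phi_n^{\Theta^A}(x_n^{\Theta})$:
\[
\|e_{n+1}\| \le \|e_n\| + h_n\|\Phi_n^{\Theta^A}(x_n^{\Theta}) - \Phi_n^{\Theta^A}(x_n^{\Theta^A})\| + h_n\|\Phi_n^{\Theta}(x_n^{\Theta}) - \Phi_n^{\Theta^A}(x_n^{\Theta})\|.
\]
\begin{itemize}
\item The first difference is at most $L_\Phi\|e_n\|$. Here $L_\Phi$ is the Lipschitz constant of the increment, which is inherited from the uniform Lipschitz bound $L$ on $u_t^{\Theta^A}$ over $K$. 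This step is exactly where we use that all intermediate states stay in $K$.
\item The second difference is at most $C_\Phi\,\delta$, where $\delta = \sup_{x\in K,t}\|u^{\Theta}_t(x)-u^{\Theta^A}_t(x)\|$. This holds because $\Phi_n$ depends Lipschitz-continuously on the vector-field values it evaluates. For an explicit Euler or DDIM step, $C_\Phi$ is essentially $1$ up to schedule coefficients. For a Runge--Kutta-type step, the stage evaluations are nested, and one propagates the bound through each stage.
\end{itemize}
Together these give $\|e_{n+1}\| \le (1+h_nL_\Phi)\|e_n\| + h_nC_\Phi\delta$. The discrete Grönwall lemma then yields
\[
\|e_N\| \le C_\Phi \delta \sum_n h_n \prod_{m>n}(1+h_mL_\Phi) \le C_\Phi\,\delta\,(e^{L_\Phi}-1)/L_\Phi,
\]
which is the claim with $C_{\text{stab}}$ depending on $L$, $N$ and the time interval. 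Taking the supremum over $z\in\psi(K)$ is harmless because none of the constants depend on $z$.

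For multi-step solvers I would proceed in the same way after rewriting the $k$-step recursion in companion (stacked) form $Y_{n+1} = A Y_n + h\,\mathbf{F}_n$. Zero-stability means the powers of $A$ are uniformly bounded in some norm, $\|A^j\|\le M$. Consistency enters only through a correct start-up, and the start-up values are computed identically for both models, so their discrepancy is bounded by the one-step case. The same Grönwall argument in the stacked norm then gives the bound with an extra factor of $M$.

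I expect the main obstacle to be justifying the Lipschitz and perturbation constants of $\Phi_n$ for the actual DDIM inversion. Its update involves the schedule ratios $\sqrt{\alpha_{t+1}/\alpha_t}$ and the term $\hat z_0^t$, which depends on $1/\sqrt{\alpha_t}$. These factors can blow up near the noise end of the schedule. I would handle this by rewriting DDIM as an exponential/Euler step of the probability-flow ODE in the reparameterized time of \Cref{assump:dete_flow}, where the coefficients are bounded on $[0,1]$. The resulting growth of $C_{\text{stab}}$ would then be absorbed into its stated dependence on the step count and the time interval.
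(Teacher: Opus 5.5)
Your proposal is correct and takes the route the paper indicates: the paper gives no separate proof beyond invoking solver stability and the discrete Grönwall lemma under \Cref{assump:pro-tar-comp}, and your recursion $\|e_{n+1}\|\le(1+h_nL_\Phi)\|e_n\|+h_nC_\Phi\delta$ with $e_0=0$, together with the companion-form extension using uniformly bounded powers for multistep schemes, is that argument written out. One minor remark is that consistency plays no role when comparing the two discrete schemes, since it only matters for convergence to the exact flow $\psi^{-1}$, so your bound already holds under stability alone.
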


\begin{proposition}[\textbf{Guarantee of Attack Transferability}]
Let $z_T^{A, w} = \hat{\psi}^{-1}_{\Theta^A}(z_0^{A, w})$ denote the proxy initial noise used in \Cref{alg:MarkNull} and \Cref{alg:MarkNull-A}, and let $z_T^{w} = \hat{\psi}^{-1}_{\Theta}(z_0^{w})$ denote the initial noise under the target model. If both models are sufficiently aligned such that $\sup_t | u_t^\Theta - u_t^{\Theta^A} | \leq \varepsilon_u$, then optimizing the adversarial latent representation $z_0^{(\text{adv})}$ to decorrelate from $z_T^{A, w}$ implicitly reduces its correlation with $z_T^{w}$.

\begin{proof}
The flow matching training objective regresses $u_t$ to a unique marginal vector field $u^{\text{target}}_t$ \cite{holderrieth2025introduction}. Assuming sufficient convergence, the vector field discrepancy $\varepsilon_u$ is minimal. Applying \Cref{lemma}, the inversion error is bounded by:
\begin{equation}
    \| z^{w}_T - z^{A, w}_T \| \le C_{\text{stab}} \varepsilon_u. 
\end{equation}
Therefore, \method minimizes $\mathrm{NLAS}(z_0^{(\mathrm{adv})}, z_T^{A, w})$, which in turn approximates minimizing the absolute inner product $|\langle z_0^{(\mathrm{adv})}, z_T^{w} \rangle|$. By applying the triangle inequality and Cauchy-Schwarz inequality to the decomposition, we obtain:
\begin{equation}
\begin{aligned}
 &|\langle z_0^{(\mathrm{adv})}, z_T^{w} \rangle|= |\langle z_0^{(\mathrm{adv})}, z_T^{A, w} + (z_T^{w} - z_T^{A, w}) \rangle| \\
&\le |\langle z_0^{(\mathrm{adv})}, z_T^{A, w} \rangle| + |\langle z_0^{(\mathrm{adv})}, z_T^{w} - z_T^{A, w} \rangle| \quad  \\
&\le \underbrace{|\langle z_0^{(\mathrm{adv})}, z_T^{A, w} \rangle|}_{\text{NLAS}} + \underbrace{\|z_0^{(\mathrm{adv})}\| \cdot \|z_T^{w} - z_T^{A, w}\|}_{C_{\text{stab}} \varepsilon_u }.
\end{aligned}   
\end{equation}

When normalized $\|z_0^{(\mathrm{adv})}\|=1$, and by \Cref{lemma} the error term is upper-bounded by $C_{\mathrm{stab}}\varepsilon_u$. Therefore, when the optimization drives the correlation (i.e., NLAS) under the proxy model toward $0$, the correlation under the target model is guaranteed to remain within $C_{\mathrm{stab}}\varepsilon_u$. As long as the two models are similar (i.e., $\varepsilon_u$ is small), the attack can be successfully realized using the proxy model.
\end{proof}
\end{proposition}

\subsection{Watermark Removal Guarantee}

Building upon the transferability guarantee established in the previous section, we now formally analyze the mechanism by which \method removes the watermark signal. While Section \ref{sec:consistency} ensures that our proxy-guided optimization remains valid for the target model, this section demonstrates that the optimization objective itself, minimizing NLAS, mathematically leads to a reduction of correlation-based detection statistics.

\begin{proposition}[\textbf{Watermark Removal via Alignment Suppression}]
This proposition applies to initial-noise-based watermarking schemes that rely on correlation-based detection such as TreeRing~\cite{wen2023tree}, Gaussian Shading~\cite{yang2024gaussian}, and PRC~\cite{gunnPRC}, where the decision statistic depends on the alignment between the recovered watermark $z_T^w$ and the original initial noise. By minimizing $\mathrm{NLAS}$, \method reduces the correlation between the adversarial latent $z_0^{(\mathrm{adv})}$ and the target watermark $z_T$ to the statistical baseline of independent random vectors.

\begin{proof}
\method explicitly minimizes the NLAS. Since NLAS is defined as a monotonic metric quantifying the deviation of the included angle from $\pi/2$, the condition $\text{NLAS} \to 0$ implies $\langle z_0^{(\text{adv})}, z_T \rangle \to 0$ (i.e., orthogonality). In high-dimensional spaces, according to the \textit{\textbf{Concentration of Measure phenomenon}}, independent isotropic random vectors are approximately orthogonal with extremely high probability (correlation concentrates around 0). Consequently, this optimization process suppresses the alignment metric to the statistical baseline of "random noise pairs". For detectors relying on this specific alignment structure, the detection statistic will degenerate to the Null Hypothesis $\mathcal{H}_0$, classifying the image as non-watermarked, leading to detection evasion.
\end{proof}
\end{proposition}

The analysis above primarily addresses watermarks rooted in the generation dynamics. However, watermarking schemes based on fine-tuning and post-hoc methods operate through different mechanisms. These approaches typically introduce perturbations in the high-frequency domain or alter the statistical distribution of the model. In the following proposition, we extend our theoretical framework to demonstrate that our method remains effective against these paradigms by leveraging the manifold projection property of the VAE decoder.

\begin{proposition}[\textbf{Generalization to Fine-tuning and Post-hoc Schemes}]
The adversarial latent perturbation generated by \method is effective against fine-tuning and post-hoc watermarking schemes. Since these watermarks can be modeled as additive perturbations $\delta$ that typically reside in the high-frequency space of the image manifold, \method evades detection by projecting the watermarked image onto the tangent space of the generative manifold, thereby filtering out the watermark signal.
\end{proposition}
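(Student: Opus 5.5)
The argument models the watermarked image as $I_w = I + \delta$, following \Cref{eq:add_wm}, where $I=\Theta(z_T,y)$ is a clean generation and $\delta$ is the residual. The image produced by \method is always of the form $I_A=\mathcal{D}^A(z_0^{(\mathrm{adv})})$, so it lies in the range $\mathcal{M}^A:=\mathcal{D}^A(\mathcal{Z})$ of the proxy decoder, a smooth low-dimensional submanifold of image space. The plan is to split $\delta$ into two parts: a component tangent to $\mathcal{M}^A$ at the encoded point $\mathcal{D}^A(\mathcal{E}^A(I_w))$, and a normal (off-manifold) component. Then I show that each step of \cref{alg:MarkNull} attenuates one of them.

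\textbf{Step 1 (normal component).} I take the VAE round trip $\mathcal{D}^A\circ\mathcal{E}^A$ to be, to first order, the orthogonal projection $P$ onto the tangent space $T\mathcal{M}^A$ near natural images. This is the information-bottleneck assumption already used for VA/DA. Linearizing,
\[
\mathcal{D}^A(\mathcal{E}^A(I+\delta)) \approx I + P\delta + O(\|\delta\|^2).
\]
It then remains to argue that $\|P\delta\|\ll\|\delta\|$. The assumption that watermark residuals sit in high-frequency bands is what makes this work. Natural images, and hence $T\mathcal{M}^A$, are spectrally concentrated at low frequencies (the decoder is smooth and Lipschitz, with bounded high-frequency gain). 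So a residual supported at high frequency has a small projection, bounded by the decoder's high-frequency gain times $\|\delta\|$.

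\textbf{Step 2 (tangent component).} The initialization $z_0^{(\mathrm{adv})}=\mathcal{E}^A(I_w)$ may still carry the residual $P\delta$. I handle it with the $\epsilon$-bounded latent update. The fidelity term $\mathcal{L}_{\mathrm{img}}$ confines $I_A$ to a ball $d_{\mathrm{img}}(I_A,I_w)\le\eta$. The NLAS term, via the Bijectivity proposition, pushes $z_0^{(\mathrm{adv})}$ along directions that change the implied $z_T$. Since the detector $f$ is calibrated on the watermarked distribution, I then argue that $f(I_A)$ deviates from $f(I_w)$ by at least the component of the latent shift along $\nabla f$. I would bound this with a first-order Taylor expansion of $f\circ\mathcal{D}^A$ around $\mathcal{E}^A(I_w)$. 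The weak claim I aim for is $f(I_A)\le f(I) + L_f(\|P\delta\|+\text{tangent residual})$. This falls below $\tau$ whenever the clean score $f(I)$ has margin exceeding that quantity.

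\textbf{Main obstacle.} The hard step is Step 2. The latent perturbation is driven by NLAS, not by $\nabla f$, so nothing forces it to be aligned with the detector. Without a correlation assumption between the two, only the passive Step 1 filtering is rigorous. I would first try to show that the tangent component $P\delta$ is itself absorbed into the re-inverted noise, so that decorrelating from $z_T^{A,w}$ also removes it. If that fails, the proposition will be stated conditionally on $\delta$ being predominantly normal to $\mathcal{M}^A$, with the NLAS term treated as an additional rather than necessary contribution.
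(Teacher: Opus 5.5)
With your fallback in place, you reach the same point as the paper. The attacked image keeps only the tangential part $P_T\delta$ (your $P\delta$), and undetectability rests on the hypothesis $\|(\mathrm{Id}-P_T)\delta\|\gg\|P_T\delta\|$, which the paper simply posits.

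The difference is where the projection comes from. The paper's proof does not use the encoder round trip. It idealizes \method's objective as the nearest-point problem $\min_{\Delta z}\|\mathcal{D}^A(z_0+\Delta z)-(I_{\text{clean}}+\delta)\|_2^2$ and linearizes the decoder at $z_{\text{clean}}$. It then reads $P_T=J_z(J_z^\top J_z)^{-1}J_z^\top$ off the normal equations of $\min_{\Delta z}\|J_z\Delta z-\delta\|_2^2$, so orthogonality comes for free from the least-squares structure of the fidelity term.

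In your Step 1, orthogonality is an extra hypothesis on the encoder. If the round trip is the identity on $\mathcal{M}^A$, that only makes $J_{\mathcal{D}^A}J_{\mathcal{E}^A}$ \emph{a} projection onto the tangent space, with kernel $\ker J_{\mathcal{E}^A}$. Such an oblique projection can send a purely normal $\delta$ to a large tangent vector. It is orthogonal only if $J_{\mathcal{E}^A}$ is the pseudo-inverse of $J_{\mathcal{D}^A}$.

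Each route buys something. Yours explains the filtering without any optimization, so it also covers plain VAE regeneration, in line with the ``VAE as low-pass filter'' remark in the Key Insights. The paper's identifies the fidelity term, not NLAS, as the filter. Since that term drives the output to $I_{\text{clean}}+P_T\delta$ to first order whatever $\mathcal{E}^A$ returns, invoking it lets you drop the encoder hypothesis.

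Step 2 has no counterpart in the paper, which never tries to remove $P_T\delta$, and your doubts about it are justified. The NLAS gradient with respect to $z_0^{(\mathrm{adv})}$ lies in $\mathrm{span}\{z_0^{(\mathrm{adv})},z_T^{A,w}\}$, so it is not aimed at $P_T\delta$. Meanwhile the fidelity term actively preserves $P_T\delta$. Keep the fallback and discard Step 2.

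Your spectral justification of $\|P_T\delta\|\ll\|\delta\|$ also fails. Lipschitz continuity of $\mathcal{D}^A$ in $z$ bounds $\|J_z\|$ but says nothing about the spatial-frequency content of its columns. State it as a hypothesis, as the paper does.

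Your Lipschitz bridge from attenuation to $f(I_A)<\tau$ is a genuine addition, since the paper jumps straight from ``energy attenuated'' to ``undetectable''. But any detector separating $I_w$ from $I_{\text{clean}}$ has $L_f\ge (f(I_w)-f(I_{\text{clean}}))/\|\delta\|$. So the bridge certifies evasion only when $\|P_T\delta\|/\|\delta\|<(\tau-f(I_{\text{clean}}))/(f(I_w)-f(I_{\text{clean}}))$, which is a quantitative version of the paper's $\gg$.
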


\begin{proof}
Let $\mathcal{M} = \{ \mathcal{D}^A(z_0) \mid z_0 \in \mathcal{Z} \}$ be the image manifold. Post-hoc and fine-tuning watermarks are modeled as $I_w = I_{\text{clean}} + \delta$. \method optimizes the latent representation to minimize the reconstruction loss, which is mathematically equivalent to solving a local projection problem:
\begin{equation}
    \min_{\Delta z} \| \mathcal{D}^A(z_0 + \Delta z) - (I_{\text{clean}} + \delta) \|_2^2.
\end{equation}
By performing a first-order Taylor expansion of the decoder around $z_{\text{clean}}$ with Jacobian $J_z$, the optimization simplifies to a linear least-squares problem: $\min_{\Delta z} \| J_z \Delta z - \delta \|_2^2$. The closed-form solution yields the update in pixel space:
\begin{equation}
    \Delta I_A \approx J_z \Delta z^* = P_T \delta, \quad \text{where } P_T = J_z (J_z^\top J_z)^{-1} J_z^\top.
\end{equation}
Here, $P_T$ is the orthogonal projection matrix onto the tangent space $T_{I_{\text{clean}}}\mathcal{M}$. The reconstruction $I_A$ effectively retains the component $P_T \delta$ (on-manifold content) while discarding $(I - P_T)\delta$ (off-manifold noise).
Since fine-tuning and post-hoc watermarks are designed to be imperceptible, they predominantly occupy the high-frequency subspace orthogonal to the semantic manifold (i.e., $\| (I - P_T)\delta \| \gg \| P_T \delta \|$). Consequently, the projection $P_T$ significantly attenuates the watermark energy, rendering it undetectable.
\end{proof}

\section{Evaluation and Analysis}

We conduct comprehensive evaluations of the proposed \method and \methodA across a broad range of tasks, evaluation metrics, and representative baselines. \Cref{sec:exp_set} outlines our experimental setup, including the used datasets and evaluation metrics. In \Cref{sec:exp_res,sec:synthID,sec:video}, we present four key aspects of our analysis: attack effectiveness, image quality preservation, efficiency and usability, and generalization. Together, these experiments demonstrate the capability of our proposed attacks to address key research gaps in watermark removal.

\subsection{Experimental Setup}\label{sec:exp_set}

\method and \methodA are implemented with the hyperparameter settings described in \Cref{sec:marknull2_set}, where we also report the corresponding experimental results.

\textbf{Watermarking Baselines.} We deploy nine representative AI-generated image watermarking baselines, covering three major paradigms. Post-hoc watermarking methods include DWT-DCT-SVD (DwtDct) \cite{navas2008dwt}, RivaGan \cite{zhang2019robust}, and HiDDeN \cite{zhu2018hidden}. Fine-tuning-based methods include Stable Signature (SS) \cite{fernandez2023stable}, VINE \cite{lu2024robust}, and SleeperMark \cite{wang2024sleepermark}, which embed watermarks by updating the generative model parameters. Initial-noise-based methods include Gaussian Shading (GS) \cite{yang2024gaussian}, PRC \cite{gunnPRC}, and TreeRing (TR) \cite{wen2023tree}, which encode watermarks by manipulating the initial noise in the diffusion process. Among these, TR is a zero-bit watermarking scheme that verifies watermark presence without decoding a bit-string watermark.


\textbf{Model and Dataset.} We leverage Stable Diffusion 2.1-base (SD2.1)~\cite{stabilityai_sd21base_2022} as the default backbone for both image generation and watermark embedding. SD2.1 is a widely adopted open-source T2I model known for its high-fidelity outputs, and it serves as an underlying model for constructing AI-generated image watermarking schemes. To benchmark watermarking performance, we randomly sample prompts from the Stable Diffusion Prompts (SDP) dataset~\cite{gustavosta_stable_diffusion_prompts} and generate 100 watermarked images of resolution $512 \times 512$ for each watermarking method using SD2.1. All images are generated with a fixed number of 50 inference steps to ensure consistency across comparisons. For post-hoc watermarking baselines, we instead generate 100 images using Stable Diffusion XL-base 1.0 (SDXL1.0)~\cite{stabilityai_sdxl}, followed by the application of watermarking schemes after image synthesis. In addition, when a proxy model is required by the adversary within attack methods, we instantiate it using Stable Diffusion 1.5 (SD1.5). 


\textbf{Attack Baselines.} To comprehensively assess our attacks' capability for image watermark removal, we evaluate \method and \methodA against a broad suite of attack baselines covering both standard distortions and advanced watermark removal attacks. Specifically, five standard distortions include Gaussian noise, Gaussian blur, contrast adjustment, brightness adjustment, and JPEG compression. The advanced watermark removal attacks are further grouped into: (i) regeneration attacks, including Diffusion-Attack (DA) \cite{zhao2024invisible}, VAE-Attack (VA) \cite{an2024waves}, CtrlRegen+ \cite{liu2024image}, and Next Frame Prediction Attack (NFPA) \cite{qiufuture}; and (ii) adversarial attacks, including UnMarker \cite{kassis2025unmarker}, Imprint \cite{muller2025black}, and NoisePrints (NPA) \cite{goren2025noiseprints}. For each baseline, we use the attack hyperparameters reported in the original papers, so that the resulting outputs exhibit comparable visual quality, thereby enabling a fair, controlled comparison of watermark removal effectiveness. 


\textbf{Evaluation Metrics.} We employ five metrics to evaluate both reconstruction fidelity and perceptual realism, including PSNR$\blacktriangle$, SSIM$\blacktriangle$, LPIPS$\blacktriangledown$, FID$\blacktriangledown$, and BRISQUE$\blacktriangledown$. BRISQUE~\cite{mittal2012no}, a no-reference metric, evaluates perceptual quality without the ground truth. To facilitate a holistic assessment of image quality, we introduce a CQS$\blacktriangle$. In addition, we evaluate an attack's ability to evade watermark detection using BA and TPR@1\%FPR. Detailed definitions of these metrics are provided in \Cref{sec:metric_detail}.

\begin{table*}[t]
\centering
\small
\setlength{\tabcolsep}{4pt}
\renewcommand{\arraystretch}{0.9}
\caption{Attack evaluation against watermarking schemes (TPR@1\%FPR$\blacktriangledown$ for zero-bit watermarking; BA[\%] $\blacktriangledown$ for multi-bit watermarking). }
\label{tab:attack_vs_methods}
\begin{tabular}{p{2.1cm}*{9}{P{1.2cm}}|c}
\toprule
& \multicolumn{9}{c|}{\textbf{Multi-bit}}  & \textbf{Zero-bit}  \\
\midrule

\textbf{Attack}  & DwtDct & RivaGan & HiDDeN & SS & VINE &SleeperMark & GS & PRC &\textbf{Avg.} & TR\\
\midrule
\textit{No Attack}      &100  &99.96  &98.99  &99.76  &99.79  &99.96  &100 &100 &99.81 &100 \\
\midrule
Noise     &84.81  &99.81  &53.91  &96.85 &99.45  &99.31  &99.75  &89.94  &90.48  &92   \\
Blur  &100  &99.96  &72.12  &85.08  &99.81  &99.98  &100  &100 &94.62  &92   \\
Contrast   &50.64  &99.40  &87.12  &98.16  &99.71  &99.98  &100  & 100 &91.88  &98    \\
JPEG            &99.66  &98.62  &77.85  &79.77  &99.74  &99.75  &99.93  &95.82 &93.89  &91   \\
Brightness &50.19 &99.44 &89.63 &98.40 &99.70 &99.94 &100&100 &92.16 &96 \\
\midrule
VA \cite{zhao2024invisible}  &78.97  &64.78  &60.81  &62.66  &96.68  &95.98  &98.59  &84.47  &80.17  &83 \\
DA  \cite{an2024waves}    &69.64  &61.75  &61.87  &48.80  &91.67  &99.81  &99.74  &96.76    &78.76 &78  \\
CtrlRegen$+$ \cite{liu2024image}  &53.16  &57.90  &59.74  &47.64  &68.96  &89.10  &91.92  &52.18  & 65.08  &31 \\
NFPA \cite{qiufuture}  &52.93  &62.28  &58.37  &48.36  &52.76  &85.27 &50.60  &49.41 &57.50   &0 \\
Imprint \cite{muller2025black} & 59.81  &54.34  &61.22 &46.52 &73.70  &83.71  &31.68  &47.94 & 57.37 &34    \\
UnMarker \cite{kassis2025unmarker} &100 &87.84&63.91&93.41&98.59 &99.77 &99.48 &91.47 &91.81 &46 \\
NPA  \cite{goren2025noiseprints}   &56.91 &58.59 &60.29 &46.73 &84.16 &92.75 &66.77 &51.76&64.75  &29  \\

\midrule
\rowcolor{green!7}
\method &55.10 &59.78 &60.62 &49.12 &64.93 &42.88 &41.55 &51.18 &\textbf{53.14} &43\\
\rowcolor{green!7}
\methodA &66.43  &70.53 &59.85 &46.52 &56.00 &39.29  &70.28 &47.94 &\textbf{57.10} &63\\
\bottomrule
\end{tabular}
\end{table*}

\begin{table*}[t]
\centering
\small
\setlength{\tabcolsep}{3.6pt}
\renewcommand{\arraystretch}{0.9}
\caption{CQS evaluation of various attack methods against watermarking schemes.}
\label{tab:CQS_vs_methods}
\begin{tabular}{p{2.1cm}*{10}{P{1.2cm}}}
\toprule

\textbf{Attack}  & DwtDct & RivaGan & HiDDeN & SS & VINE &SleeperMark & GS & PRC  & TR & \textbf{Avg.}\\
\midrule
VA & 3.19 & 3.30 & 3.16 & 3.52 & 3.74 & 3.39 & 3.63 & 3.34 & 3.50 &3.42\\
DA & 3.66 & 3.71 & 3.73 & 3.90 & 3.75 & 3.89 & 3.94 & 3.63 & 3.82 &3.78\\
CtrlRegen$+$ & 3.17 & 3.20 & 3.32 & 3.41 & 3.27 & 3.40 & 3.55 & 3.26 & 3.43 &3.33\\
NFPA & 1.04 & 1.61 & 1.89 & 1.85 & 1.55 & 1.13 & 1.87 & 1.73 & 1.62 &1.58\\
Imprint & 3.00 & 3.09 & 3.08 & 3.35 & 3.20 & 3.16 & 3.34 & 2.85 & 3.22 &3.14\\
UnMarker & 3.09 & 2.84 & 2.95 & 3.05 & 3.13 & 3.13 & 2.92 & 3.00 & 3.04 &3.02\\
NPA & 3.18 & 3.25 & 3.24 & 3.72 & 3.82 & 3.74 & 3.75 & 3.33 & 3.66 &3.52\\

\midrule
\rowcolor{green!7}
\method & \textbf{3.66} & \textbf{3.72} & \textbf{3.75} & \textbf{4.09} & \textbf{3.93} & \textbf{4.24} & \textbf{4.11} & \textbf{3.80} & \textbf{4.06} & \textbf{3.93}\\
\rowcolor{green!7}
\methodA & \textbf{3.91} & \textbf{3.94} & \textbf{3.97} & \textbf{3.88} & \textbf{3.68} & \textbf{3.99} & \textbf{4.06} & \textbf{3.85} & \textbf{3.92} &\textbf{3.91}\\
\bottomrule
\end{tabular}
\end{table*}

\subsection{Evaluation Results}\label{sec:exp_res}
\subsubsection{Effectiveness of \method and \methodA} \label{sec:effectiveness}


We evaluate \method and \methodA in terms of watermark removal effectiveness, focusing on their ability to suppress detection scores across diverse watermarking schemes and against representative attack baselines. \Cref{tab:attack_vs_methods} summarizes watermark detection performance under different attacks. For multi-bit watermarking schemes, we report BA, while for the zero-bit TR, we report 
$\mathrm{TPR}@1\%\mathrm{FPR}$. In the absence of attacks, all watermarking methods achieve near-perfect detection or bit extraction accuracy, confirming the correctness of watermark embedding and detection.

Across eight multi-bit watermarking schemes, existing attacks remain largely ineffective: most baselines continue to achieve high BA (e.g., VA retains average BA of $80.17\%$, and UnMarker remains at $91.81\%$), indicating that the embedded watermarks are still reliably recoverable. VA and DA, which represent simple regeneration attacks, \textit{fail to compromise robust watermarking schemes} like VINE and SleeperMark. UnMarker, although effective against certain post-hoc watermarking, does not generalize beyond its targeted embedding artifacts. \textit{Notably, all the above approaches are ineffective against initial-noise-based watermarking schemes.} 

\method reduces the average BA to 53.14\%, approaching the random-guessing baseline (50\%), and achieves near-chance performance on robust schemes like GS and SleeperMark. Its amortized variant, \methodA, exhibits comparable performance. Overall, both attacks outperform existing baselines in disrupting multi-bit watermark detection, with \method achieving the strongest average effectiveness. For the zero-bit TR scheme, \method reduces $\mathrm{TPR}@1\%\mathrm{FPR}$ from $100\%$ to $43\%$, while \methodA achieves $63\%$. Although some baselines achieve slightly lower detection rates on TR, our methods provide broader and more consistent suppression across both multi-bit and presence-only detectors. These results highlight the strong generalizability of our approach. By explicitly decorrelating $z^w_T$ and $z^{\mathrm{(adv)}}_0$ through NLAS minimization, \method remains highly effective even against the stealthiest and most lossless initial-noise-based watermarking schemes (e.g., GS and PRC).

\begin{figure*}[t]
    \centering
    \captionsetup[subfigure]{skip=3pt, font=small}

    \begin{subfigure}[b]{0.19\linewidth}
        \centering
        \includegraphics[width=\linewidth]{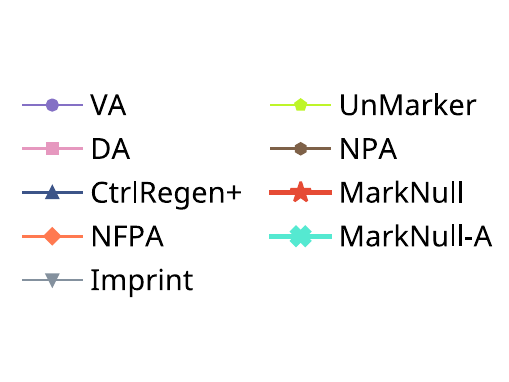}
        \caption{Legend}
        \label{fig:legend}
    \end{subfigure}\hfill
    \begin{subfigure}[b]{0.19\linewidth}
        \centering
        \includegraphics[width=\linewidth]{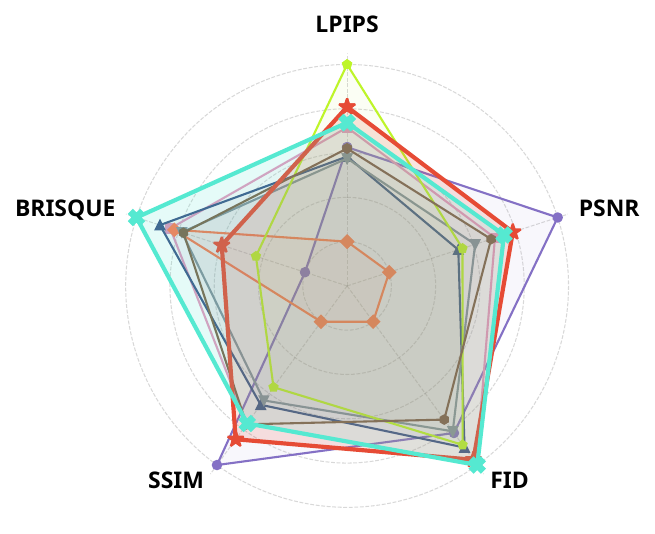}
        \caption{DwtDct}
        \label{fig:dwtdct_2}
    \end{subfigure}\hfill
    \begin{subfigure}[b]{0.19\linewidth}
        \centering
        \includegraphics[width=\linewidth]{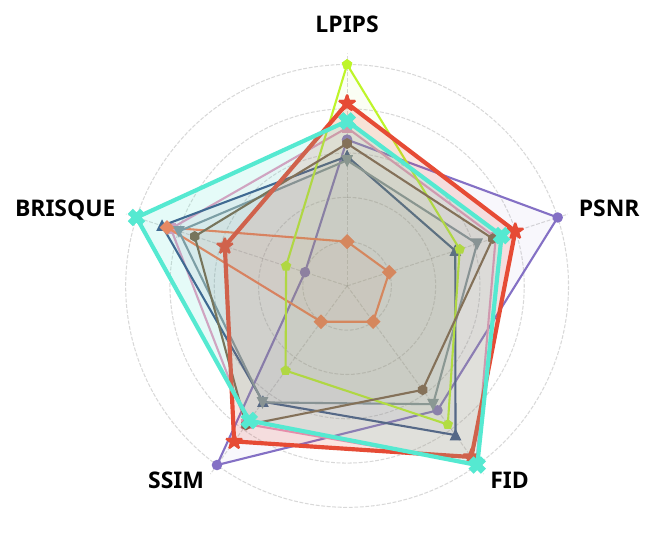}
        \caption{RivaGan}
        \label{fig:rivagan}
    \end{subfigure}\hfill
    \begin{subfigure}[b]{0.19\linewidth}
        \centering
        \includegraphics[width=\linewidth]{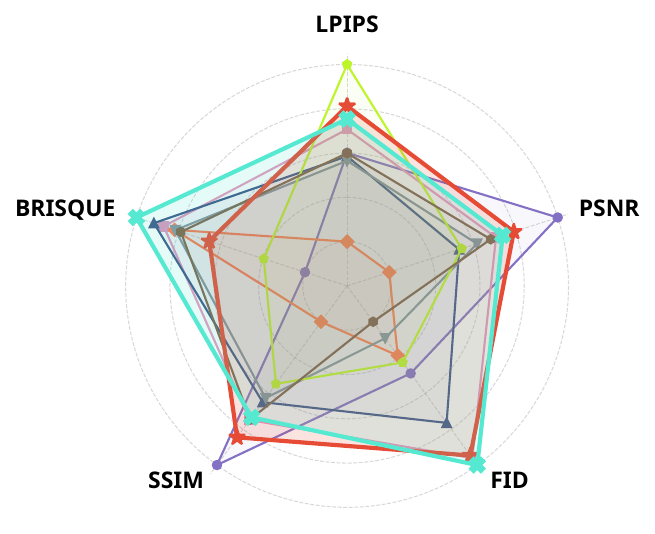}
        \caption{HiDDeN}
        \label{fig:hidden}
    \end{subfigure}\hfill
    \begin{subfigure}[b]{0.19\linewidth}
        \centering
        \includegraphics[width=\linewidth]{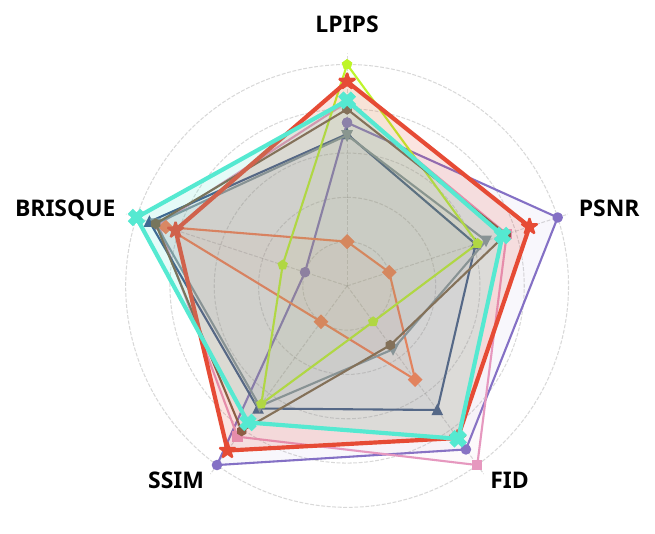}
        \caption{StableSignature}
        \label{fig:stablesignature}
    \end{subfigure}

    \begin{subfigure}[b]{0.19\linewidth}
        \centering
        \includegraphics[width=\linewidth]{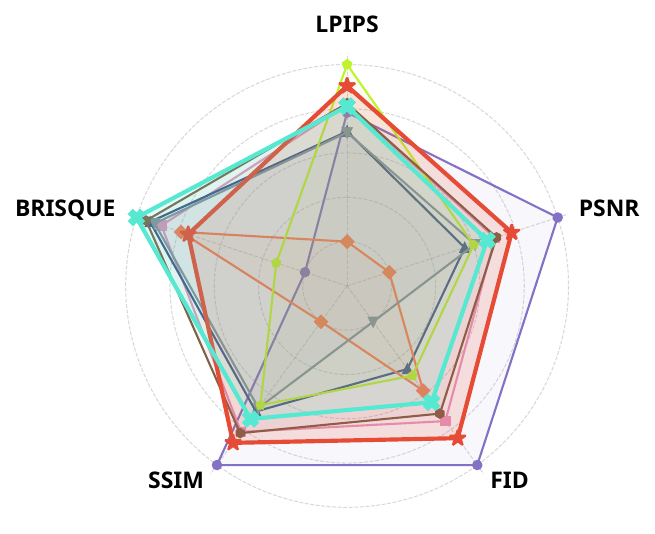}
        \caption{VINE}
        \label{fig:vine}
    \end{subfigure}\hfill
    \begin{subfigure}[b]{0.19\linewidth}
        \centering
        \includegraphics[width=\linewidth]{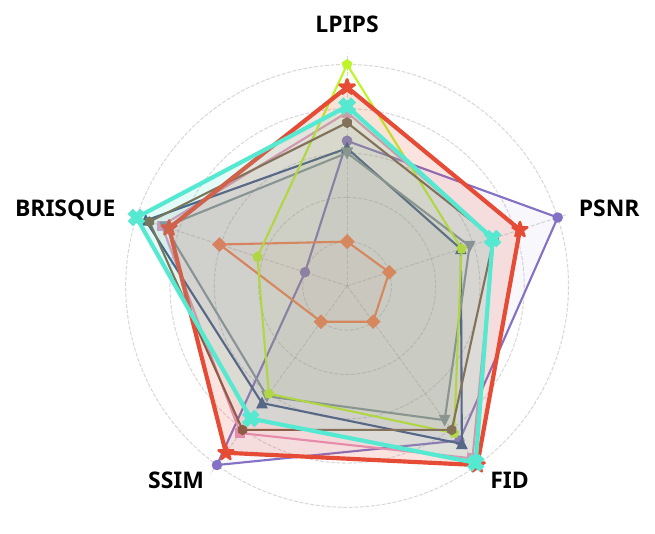}
        \caption{SleeperMark}
        \label{fig:SleeperMark}
    \end{subfigure}\hfill
    \begin{subfigure}[b]{0.19\linewidth}
        \centering
        \includegraphics[width=\linewidth]{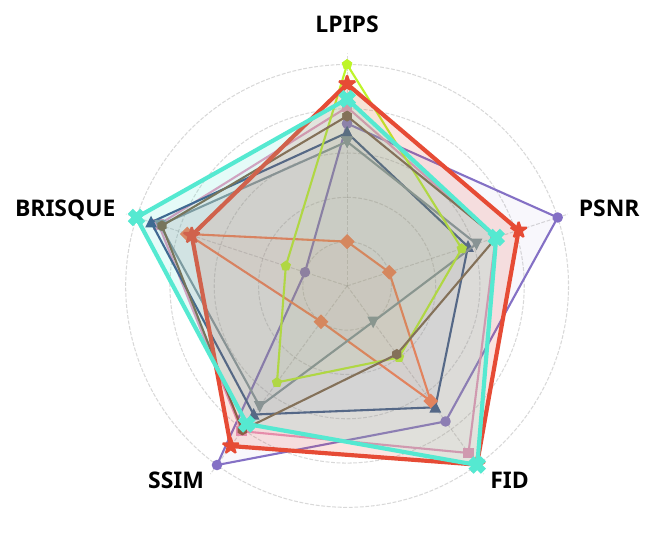}
        \caption{GS}
        \label{fig:gs}
    \end{subfigure}\hfill
    \begin{subfigure}[b]{0.19\linewidth}
        \centering
        \includegraphics[width=\linewidth]{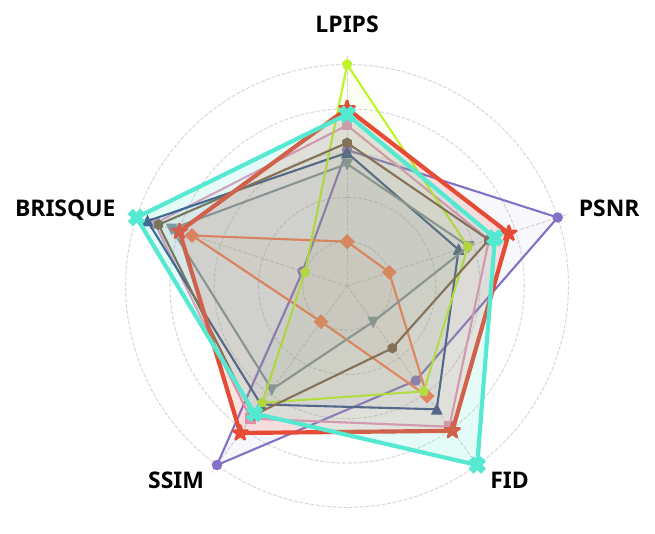}
        \caption{PRC}
        \label{fig:prc}
    \end{subfigure}\hfill
    \begin{subfigure}[b]{0.19\linewidth}
        \centering
        \includegraphics[width=\linewidth]{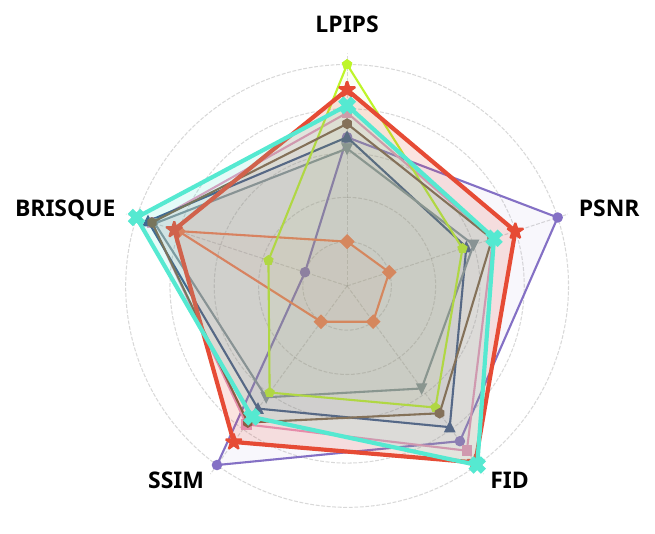}
        \caption{TR}
        \label{fig:tr}
    \end{subfigure}

    \caption{Quantitative comparison of image quality metrics. For visualization, we normalize all metrics and orient them so that larger values consistently indicate better quality (i.e., lower-is-better metrics are inverted).}
    \label{fig:all_radar_charts}
\end{figure*}

\subsubsection{Evaluation of Image Quality Preservation} 

We conduct a comprehensive quantitative evaluation against seven baseline attacks to demonstrate that \method and \methodA do not compromise the semantic and perceptual fidelity of images. \Cref{tab:CQS_vs_methods} shows that \method and \methodA demonstrate SOTA performance in maintaining image quality. \method achieves the highest CQS across all nine tested watermarking schemes, consistently outperforming existing methods such as UnMarker and NPA. In particular, for generative watermarking schemes like SS and TR, \method achieves a CQS of 4.09 and 4.06, respectively, significantly surpassing the strongest baseline DA. These results indicate that \method effectively suppresses watermark signals while preserving image fidelity, whereas baseline latent-space adversarial attacks (e.g., NPA and Imprint) often induce visible artifacts or blurring, leading to degraded perceptual quality. \Cref{fig:all_radar_charts} illustrates the performance distributions across individual quality metrics. The results show that \method and \methodA perform consistently well across both pixel-level metrics (PSNR, SSIM) and perceptual metrics (LPIPS, FID). Additional qualitative comparisons of attacked images are provided in \Cref{sec:quali_analy}.

\begin{takeawaybox}

\textbf{Takeaway 1:} Prior adversarial attacks suffer from an inherent fidelity–removal trade-off. \method effectively breaks this dilemma. By explicitly minimizing NLAS to decorrelate watermark signals from semantic content, it achieves near-perfect removal performance while maintaining SOTA perceptual quality. \end{takeawaybox}

\subsubsection{Attack Efficiency Assessment}
We evaluate the practical deployability of our attacks by measuring attack latency (seconds per image) and GPU memory consumption (VRAM in MB) on a single NVIDIA A100 GPU. As shown in \Cref{fig:efficiency}, existing optimization-based adversarial attacks (Imprint, UnMarker, and NPA) exhibit substantial computational costs. In particular, UnMarker requires over 700 seconds per image, and NPA takes 33.8 seconds, making both unsuitable for large-scale deployment. In contrast, our proposed \methodA achieves significantly improved efficiency, requiring only 0.50 seconds per image, which is orders of magnitude faster than all optimization-based baselines. In terms of memory usage, \methodA consumes 6,282 MB of VRAM, comparable to CtrlRegen$+$ (6,532 MB) and lower than high-resource methods such as UnMarker (14,378 MB). Although \method is more computationally intensive due to iterative optimization, its VRAM usage remains below 10 GB, suggesting its potential compatibility with consumer GPUs.

\begin{figure}[h]
    \centering
    \includegraphics[width=\linewidth]{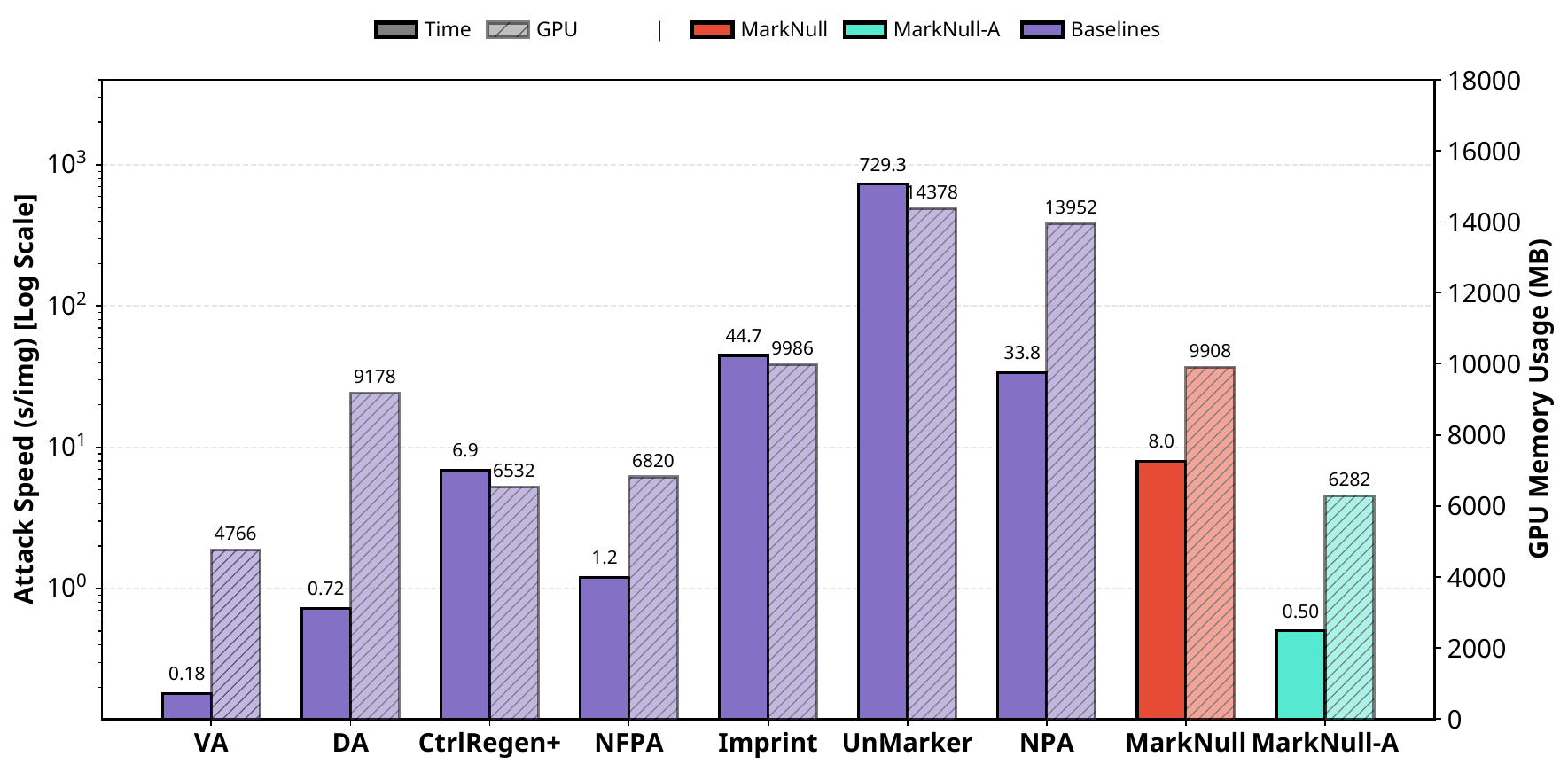}
    \caption{Comparison of attack speed and GPU memory usage across nine watermark removal methods.}
    \label{fig:efficiency}
\end{figure}

\begin{takeawaybox}

\textbf{Takeaway 2:} Existing attacks incur prohibitive computational overhead. \methodA addresses this by replacing iterative DDIM inversions with a single feed-forward pass, achieving real-time inference at low VRAM consumption. These results demonstrate that effective watermark removal requires no costly iterative refinement, enabling scalable deployment on consumer hardware.
\end{takeawaybox}

\subsection{Breaking Google's SynthID-Image} \label{sec:synthID}
SynthID-Image~\cite{gowal2025synthid} is a post-hoc, model-agnostic watermarking scheme designed for large-scale provenance verification of AI-generated images. It employs a learned encoder–detector pair to embed and detect imperceptible watermarks, and has been deployed across Google AI systems\footnote{\url{https://deepmind.google/models/synthid/}}. We evaluate our \method, \methodA, and baselines on the commercial SynthID-Image watermarking system using 20 AI-generated images produced by Imagen-3 with prompts from the SDP dataset \cite{gustavosta_stable_diffusion_prompts}. After applying each attack, we assess watermark detectability and image fidelity, with results reported in \Cref{tab:attack_synthid}. Since the official detection code of SynthID-Image is not publicly available, we follow Google’s Gemini (fast mode) verification procedure for watermark detection\footnote{\url{https://support.google.com/gemini/answer/16722517?hl=en-SG\&co=GENIE.Platform\%3DAndroid}}.

\Cref{tab:attack_synthid} reveals the critical security vulnerabilities of SynthID-Image. Our work is the \textit{first study} to explore the robustness of a commercial AI-generated image watermarking system against removal attacks. While SynthID-Image exhibits robustness against VA, it suffers a catastrophic failure when subjected to advanced attacks. We attribute this fragility to the limitations of its post-hoc embedding scheme, which proves insufficient against latent-space manipulations. However, among all methods achieving a 100\% ASR, \method reaches the highest CQS, as illustrated in \Cref{fig:synthid_img_qual}. Furthermore, the results for \method reported above are obtained using the default high-perturbation configuration described in \Cref{sec:marknull_set}. For post-hoc watermarks, a milder configuration (e.g., $\epsilon=1$, $N=10$, and $\kappa=0.01$) is sufficient to achieve a 100\% ASR while preserving high perceptual quality \textit{(LPIPS: 0.12, FID: 38.19, PSNR: 25.36, SSIM: 0.80, and BRISQUE: 6.11)}.


\begin{table}[t]
\centering
\small
\setlength{\tabcolsep}{4pt}
\renewcommand{\arraystretch}{0.9}
\caption{Attack evaluation against SynthID-Image. }
\label{tab:attack_synthid}
\begin{tabular}{p{2.5cm}*{2}{P{2.5cm}}}
\toprule

\textbf{Attack}  &ASR[\%] & CQS\\
\midrule
VA &0  &4.07 \\
DA  &100    &3.66\\
CtrlRegen$+$  &100 &3.29  \\
NFPA  &100 &1.21\\
Imprint &100 &3.21 \\
UnMarker  &100 &2.86 \\
NPA  &100 & 3.45  \\

\midrule
\rowcolor{green!7}
\method &\textbf{100} &\textbf{3.87}\\
\rowcolor{green!7}
\methodA &\textbf{100} &\textbf{3.70}\\
\bottomrule
\end{tabular}
\end{table}

\begin{figure}[t]
    \centering
    \includegraphics[width=0.69\linewidth]{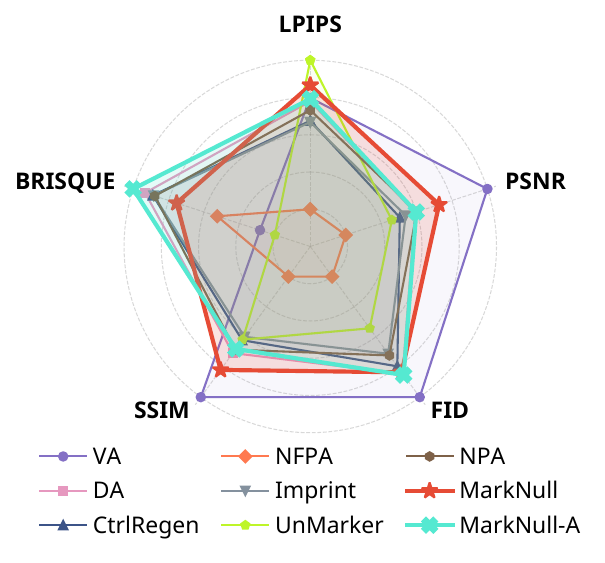}
    \caption{Image-quality metrics for SynthID-Image across different attack baselines and our methods.}
    \label{fig:synthid_img_qual}
\end{figure}


\subsection{Breaking AI-generated Video Watermarking} \label{sec:video}
To evaluate the scalability of our proposed attacks beyond static images, we extend our evaluation to emerging video watermarking frameworks: VideoShield~\cite{huvideoshield} and VideoMark~\cite{hu2025videomark}. VideoShield embeds watermarks by mapping bits into the initial noise of the diffusion process and uses DDIM inversion for watermark extraction. Similarly, VideoMark encodes watermarks into the initial noise with PRC codes on a per-frame basis and introduces a temporal matching module to enhance robustness against temporal distortions. 

Our evaluation is conducted using the \textit{damo-vilab/text-to-video-ms-1.7b}\footnote{\url{https://huggingface.co/ali-vilab/text-to-video-ms-1.7b}} text-to-video model. We randomly select 20 prompts from the test set of VBench~\cite{huang2024vbench}, and generate 16-frame videos at a resolution of $256 \times 256$ using 25 inference steps. For baseline video watermarking attacks, we adapt existing image-based watermark removal methods by applying them independently to each video frame, and then evaluate the resulting video quality based on the processed frames. Following~\cite{pan2025markdiffusion}, we use MUSIQ~\cite{ke2021musiq} as a no-reference perceptual quality metric, averaging frame-level scores to assess overall video quality.

As detailed in \Cref{tab:attack_video}, video watermarking schemes exhibit strong robustness: baseline attacks such as VA and DA fail to disrupt initial-noise-based watermarking, retaining BA above 90\%. In contrast, \method exhibits superior transferability, successfully suppressing BA to approximately 50\%. \methodA achieves an optimal trade-off, attaining the highest video quality (MUSIQ $> 0.6$) among successful attacks while effectively compromising watermark verification. Although NFPA and NPA reduce VideoMark’s BA to very low values (24.26\% and 17.26\%), this does not indicate superior watermark removal. For multi-bit watermarking, BA values significantly below the 50\% random-guessing baseline reflect bit inversion rather than erasure: the watermark signal remains present but becomes negatively correlated, allowing defenders to recover the original message by simply flipping the decoded bits (e.g., $\sim83\%$ effective recovery for NPA). 


\begin{table}[htbp]
\centering
\small
\setlength{\tabcolsep}{4pt}
\renewcommand{\arraystretch}{0.9}
\caption{Attack evaluation against video watermarking. }
\label{tab:attack_video}
\begin{tabular}{p{2.5cm}*{4}{P{1cm}}}
\toprule

\multirow{2}{*}{\textbf{Attack}} & \multicolumn{2}{c}{\textbf{VideoShield}} &\multicolumn{2}{c}{\textbf{VideoMark}} \\
\cmidrule(lr){2-3} \cmidrule(lr){4-5}
 &BA[\%]&MUSIQ &BA[\%]&MUSIQ\\
 \midrule
\textit{No Attack}  &100 &0.66  &100&0.62\\

\midrule

VA      &99.95 &0.59 &93.81 &0.55   \\
DA      &97.31 &0.61 &66.06 &0.59 \\
CtrlRegen$+$ &91.20 &0.59 &56.80 &0.57  \\
NFPA    &56.93 &0.65 &24.26&0.59 \\
Imprint  &57.74 &0.64&52.79&0.61\\
UnMarker  &80.79&0.42 &80.74&0.48 \\
NPA  &82.71&0.47&17.26&0.44  \\

\midrule
\rowcolor{green!7}
\method &\textbf{47.12}&0.57 &\textbf{50.30}&0.55\\
\rowcolor{green!7}
\methodA &76.46 &\textbf{0.65} &\textbf{49.48}&\textbf{0.61}\\
\bottomrule
\end{tabular}
\end{table}

\begin{takeawaybox} 

\textbf{Takeaway 3:} \method and \methodA \textit{demonstrate unprecedented scalability across the broadest range of watermarking paradigms}, achieving a 100\% ASR against Google's SynthID-Image and generalizing seamlessly to video watermarking (VideoShield, VideoMark) without modality-specific adaptation. This confirms their truly model-agnostic nature, establishing a novel benchmark for universal watermark robustness evaluation.
\end{takeawaybox}

\section{Watermark Removal Defense} \label{sec:defense}
In this section, we explore a potential defense strategy against advanced watermark removal attacks by leveraging the distributional properties of LDMs. These models are trained to capture the high-probability distribution of natural images, effectively defining a low-dimensional generative manifold shaped by the model prior.

Watermark removal attacks typically add a subtle yet unnatural pixel-space perturbation $\Delta_A = I_A - I_w$ to suppress the watermark signal. Although such perturbations may be visually imperceptible, they often violate the learned prior and push the sample off the generative manifold in a distributional sense. Motivated by the theoretical intuition of DiffPure \cite{nie2022diffusion}, we perform \textit{Inversion $\rightarrow$ Regeneration} for attack detection. As shown in \Cref{alg:detection}, given a query image $I_Q$, we perform a full DDIM inversion to obtain $\hat{z}^Q_{T}$ and reconstruct the image as $I_Q^{\mathrm{rec}}$. For clean images that lie close to the natural image manifold, the inversion and regeneration process is approximately self-consistent, resulting in a small discrepancy between $I_Q^{\mathrm{rec}}$ and $I_Q$. In contrast, when $I_Q$ contains off-manifold adversarial perturbations in the pixel space, the diffusion prior tends to suppress these unnatural components during regeneration, leading to a noticeably larger reconstruction gap. To quantify this, we adopt the LPIPS reconstruction error $\mathcal{L}_\mathrm{LPIPS}(I_Q^{\mathrm{rec}}, I_Q)$ as the detection statistic. A query image is flagged as attacked if the LPIPS exceeds a predefined threshold $\tau_\mathrm{dete}$. This threshold is selected based on analyzing the empirical distribution of regeneration errors on a held-out validation set.

\begin{algorithm}[t]
    \caption{Watermark Removal Attack Detection}
    \label{alg:detection}
    
    \SetKwInOut{Input}{Input}
    \SetKwInOut{Output}{Output}
    
    \Input{Query image $I_Q$, Target Model $\Theta$, Detection Threshold $\tau_\mathrm{dete}$, LPIPS Distance $\mathcal{L}_\mathrm{LPIPS}$.}
    \Output{Binary decision: \textit{Attacked} or \textit{Clean}}


    \textcolor{blue}{\tcp{1. Regeneration}}
    $z^Q_0=\mathcal{E}(I_Q)$\;
    $\hat{z}^Q_{T} =\mathsf{DDIM}_{0\rightarrow T}(z^Q_0;\, \mathcal{U})$\;
    $z_0^{Q,\mathrm{rec}} =\mathsf{DDIM}_{T\rightarrow 0}(\hat{z}^Q_{T};\, \mathcal{U})$\;
    $I_Q^{\mathrm{rec}}=\mathcal{D}(z_0^{Q,\mathrm{rec}})$\;
    
    \textcolor{blue}{\tcp{2. Error Calculation}}

    $S_{error} \leftarrow \mathcal{L}_\mathrm{LPIPS}(I_Q, I_Q^{\mathrm{rec}})$\;
    \textcolor{blue}{\tcp{3. Attack Detection}}

    \If{$S_{error} > \tau_\mathrm{dete}$}{
    \Return{\textit{Attacked}}
    }
    \Else{
        \Return{\textit{Clean}}\; 
    }
\end{algorithm}

To evaluate our defense, we formulate attack detection as a binary classification task between clean watermarked images and attacked watermark-removed images. We use the reconstruction error, measured by the LPIPS distance $S_{\mathrm{error}}$, as the detection statistic. The decision threshold $\tau_{\mathrm{dete}}$ is selected via the Youden Index by sweeping observed scores to construct the Receiver Operating Characteristic (ROC) curve:

\begin{equation}
\begin{aligned}
\mathrm{FPR}(\tau_{\mathrm{dete}}) &= P\left(S_{\mathrm{error}} > \tau_{\mathrm{dete}} \mid \mathrm{Clean}\right),\\
\mathrm{TPR}(\tau_{\mathrm{dete}}) &= P\left(S_{\mathrm{error}} > \tau_{\mathrm{dete}} \mid \mathrm{Attacked}\right).
\end{aligned}
\end{equation}

The optimal threshold is then given by:
\begin{equation}
\tau_\mathrm{dete}^* = \operatorname*{argmax}_{\tau_\mathrm{dete}} (\text{TPR}(\tau_\mathrm{dete}) - \text{FPR}(\tau_\mathrm{dete})),
\end{equation}
yielding a balance between detection sensitivity and false alarms without using arbitrary error margins.

As shown in \Cref{fig:def_marknull,fig:def_marknullA}, attacked images generally exhibit higher LPIPS reconstruction errors than clean watermarked images under both \method and \methodA. This separation supports our hypothesis that adversarial perturbations introduce deviations that are amplified during DDIM inversion and regeneration, making reconstruction error a useful forensic signal. The resulting detector achieves promising performance, particularly for in-generation watermarking schemes. For example, it achieves a TPR above 90\% for SleeperMark while maintaining a low FPR, indicating clear separation between clean and attacked samples.

Nevertheless, partial distribution overlap persists under stealthy attacks such as \methodA, making consistent detection challenging. This validates the imperceptibility of our attacks and highlights the need for more robust forensic defenses in future work.



\begin{figure}[h]
    \centering
    \includegraphics[width=\linewidth]{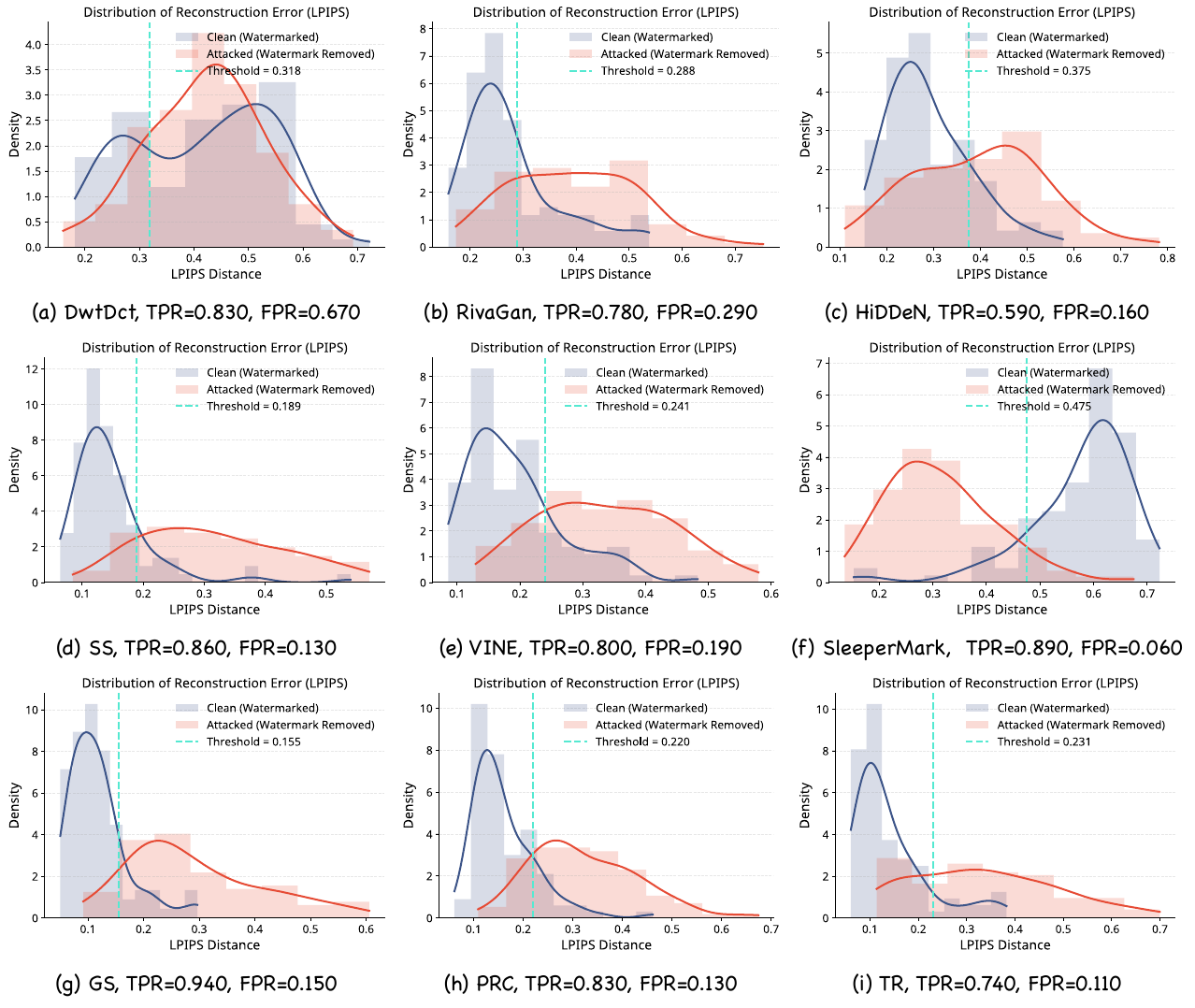}
    \caption{Comparison of LPIPS reconstruction error distributions and detection thresholds under \method (watermarked vs. watermark-removed).}
    \label{fig:def_marknull}
\end{figure}

\begin{figure}[h]
    \centering
    \includegraphics[width=\linewidth]{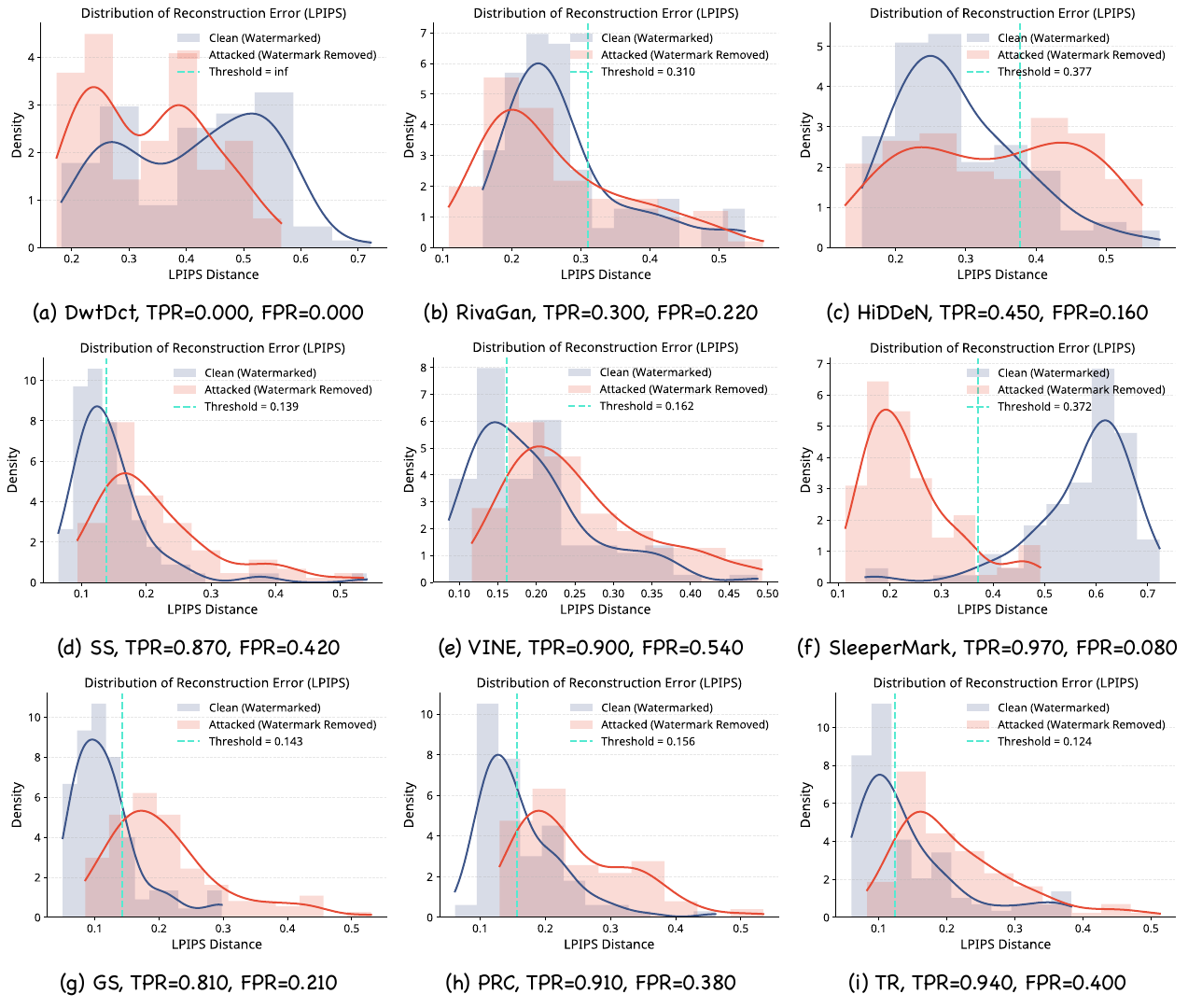}
    \caption{Comparison of LPIPS reconstruction error distributions and detection thresholds under \methodA (watermarked vs. watermark removed).}
    \label{fig:def_marknullA}
\end{figure}



\section{Conclusion}

In this paper, we have proposed \method and its amortized variant, \methodA, which together form a model-agnostic watermark removal attack framework that fundamentally overcomes the fidelity–removal trade-off and computational bottlenecks inherent in prior work. Specifically, \method leverages NLAS to guide an optimization process that provably decorrelates watermark signals from the initial noise via on-manifold latent manipulation, reducing the BA to 53.14\%, approaching the 50\% random-guessing baseline in watermark verification, while preserving semantic and perceptual fidelity. To support scalable deployment, \methodA distills the optimization objective into a learnable WRN, achieving fast inference speeds of 0.50 s/image while maintaining strong removal effectiveness. Extensive evaluations demonstrate that both \method and \methodA effectively remove watermarks across all major watermarking paradigms, and successfully compromise Google’s SynthID-Image (100\% ASR), as well as emerging video watermarking schemes. This attack framework, together with the proposed defense, serves as a rigorous red-teaming benchmark, encouraging the community to move beyond simplistic initial-noise-based watermarking and toward next-generation provenance tracking mechanisms that are robust to latent-space manipulation.

\section*{Acknowledgments}
We are grateful to our shepherd and anonymous reviewers for their valuable guidance and insightful comments. This research was supported by the Natural Sciences and Engineering Research Council of Canada Discovery Grant, the Canada Research Chair Program, National Cybersecurity Consortium R\&D Program, and the NVIDIA Academic Grant Program using A100 GPU-hours.

\section*{Ethical Considerations}
This paper investigates the robustness limitations of AI-generated image watermarking by proposing \method and \methodA, which remove watermarks through on-manifold latent manipulation under a no-box, model-agnostic threat model. Our goal is to advance rigorous watermark security evaluation and drive the development of more resilient provenance mechanisms, not to facilitate misuse.

\noindent \textbf{Potential misuse.} Watermark removal techniques can be misused to evade provenance tracking, undermine copyright attribution, and facilitate AI-generated misinformation. The risk is amplified by our attacks' model-agnostic nature, requiring no watermark information or detector access.

\noindent \textbf{Risk mitigation.} We take several steps to reduce misuse potential: (i) experiments are conducted on public datasets without private user data; (ii) we describe methodology at the level necessary for reproducibility, avoiding deployment-oriented instructions that would lower the barrier to attacking production systems; and (iii) we include a defensive counterpart, an attack detection mechanism to encourage balanced red-team and blue-team development.

\noindent \textbf{Vulnerability Disclosure.} In line with standard security research ethics, we will disclose our findings to Google's security team regarding SynthID-Image robustness issues. Our evaluation was conducted strictly within the official SynthID-Image API bounds, deliberately rate-limited, and explicitly designed to avoid any denial-of-service behavior.

\noindent \textbf{Broader impacts.} Our work provides a stronger benchmark for evaluating watermark robustness and highlights that robustness should not rely solely on pixel-domain representations when latent-space manipulation is feasible. We acknowledge disclosure risks, but believe transparent evaluation of failure modes, accompanied by effective defenses, is essential for the responsible deployment of generative models.

\section*{Open Science}
All source code, trained WRN weights, and test samples are publicly available on Zenodo: \url{https://doi.org/10.5281/zenodo.20201878}, accompanied by a detailed step-by-step \texttt{README.md}. Since all models and watermarking schemes evaluated in this paper are publicly released, our artifacts provide a complete basis for reproducing the main experimental results. We fully support open science and release these artifacts to facilitate reproducibility and foster further research.

\bibliographystyle{plain}
\bibliography{mybib}

\appendix

\section{Noise-Latent Alignment Score} \label{sec:nlas}

To evaluate the correlation within the latent space, we computed the NLAS for the optimized pairs $(z_0, z_T)$ against the randomly initialized $(z_0, z_{\mathrm{rand}})$, where $z_{\text{rand}} \sim \mathcal{N}(\mathbf{0}, \mathbf{I})$. As shown in \Cref{fig:nlas}, the NLAS between $z_0$ and its corresponding initial noise $z_T$ is markedly higher than that between $z_0$ and a randomly initialized latent $z_{\mathrm{rand}}$. Specifically, the $(z_0, z_T)$ pairs exhibit consistently elevated NLAS values with a stable median and interquartile range, whereas the NLAS of $(z_0, z_{\mathrm{rand}})$ concentrates near zero with negligible variance across samples. These observations indicate that NLAS effectively captures the structured dependence between corresponding latents along the same generation/inversion trajectory, while such dependence is absent when pairing $z_0$ with independent Gaussian noise. This distinct separation underscores the utility of NLAS as a discriminative metric for identifying true latent correspondences.

\begin{figure}[h]
    \centering
    \includegraphics[width=0.7\linewidth]{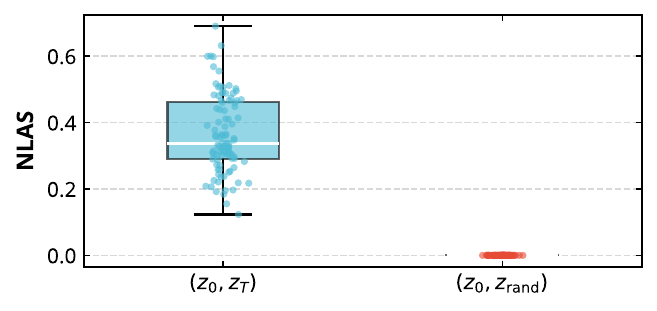}
    \caption{NLAS distributions for matched $(z_0, z_T)$ and random $(z_0, z_{\mathrm{rand}})$ pairs.}
    \label{fig:nlas}
\end{figure}

\section{Proposed Attack Details}   \label{sec:marknull2_set}
\subsection{\method}\label{sec:marknull_set}

We set the loss weights in \cref{eq:loss1} to $\lambda_1 = 1$, $\lambda_2 = 20$, $\lambda_3 = 1.5$, and $\lambda_4 = 3$. The adversarial latent variable $z_0^{(\mathrm{adv})}$ is optimized using the Adam optimizer with a learning rate $\kappa = 0.05$. To ensure the imperceptibility of adversarial perturbations, we impose a distortion budget of $\varepsilon = 1$. The optimization is performed over $N = 50$ gradient descent steps, obtaining a balance between attack effectiveness and computational efficiency.

To investigate the impact of different learning rates and optimization steps on \method's attack effectiveness and image quality, we conduct two controlled experiments. In the first setting, we fix the learning rate at $\kappa = 0.05$ and vary the number of gradient steps $N \in \{5, 10, 25, 50, 100\}$. In the second setting, we fix the number of steps at $N = 50$ and vary the learning rate $\kappa \in \{0.001, 0.01, 0.05, 0.1, 0.2, 0.5\}$. These experiments facilitate a systematic analysis of the trade-off between attack strength and perceptual fidelity under different optimization parameter settings.

Experimental results are shown in \Cref{fig:N_LPIPS_BA} and \Cref{fig:lr_LPIPS_BA}. These results reveal a clear trade-off between attack strength and perceptual fidelity. When varying the number of gradient steps $N$ (see \Cref{fig:N_LPIPS_BA}), increasing $N$ from 5 to 25 leads to notable gains in attack success (lower BA and TPR@1\%FPR), accompanied by increased perceptual deviation (higher LPIPS). However, beyond $N \geq 50$, improvements become marginal: LPIPS plateaus or even slightly decreases, suggesting that once the optimization nears convergence, additional iterations do not further degrade image quality and refine perturbations along more natural directions under the model prior. Correspondingly, the attack strength converges and becomes relatively stable for $N$ in the range of approximately $25$ to $50$, indicating diminishing returns from further optimization steps.

\begin{figure}[h]
    \centering
    \captionsetup[subfigure]{skip=3pt, font=small} 

    \begin{subfigure}[b]{0.45\linewidth}
        \centering
        \includegraphics[width=\linewidth]{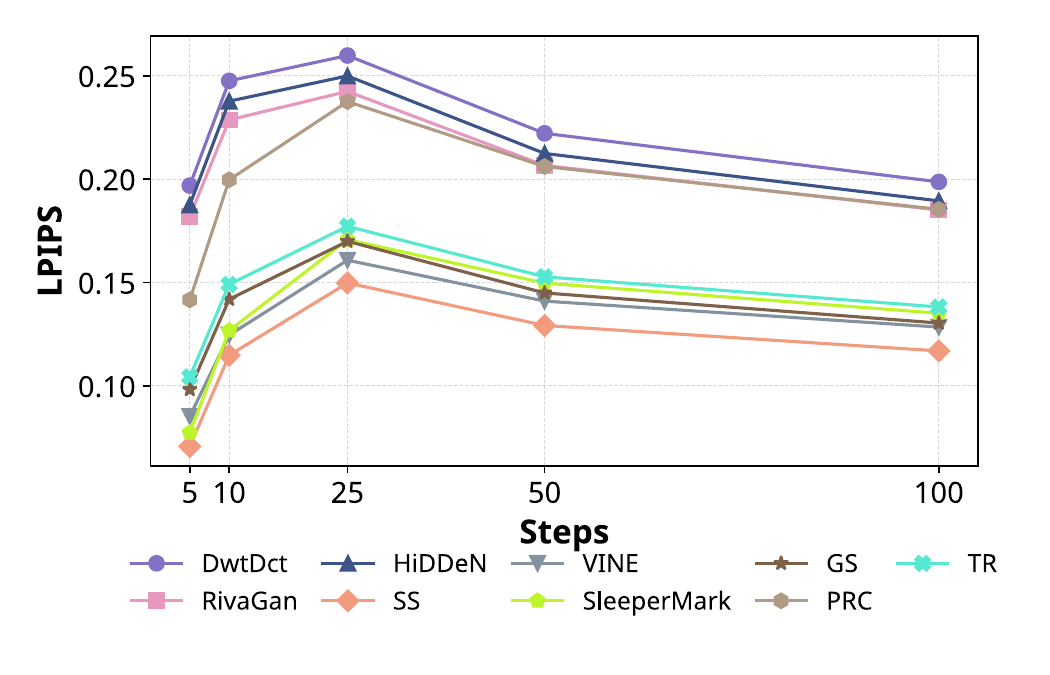}
        \caption{LPIPS}
        \label{fig:LPIPS_steps}
    \end{subfigure}
    \hfill
    \begin{subfigure}[b]{0.45\linewidth}
        \centering
        \includegraphics[width=\linewidth]{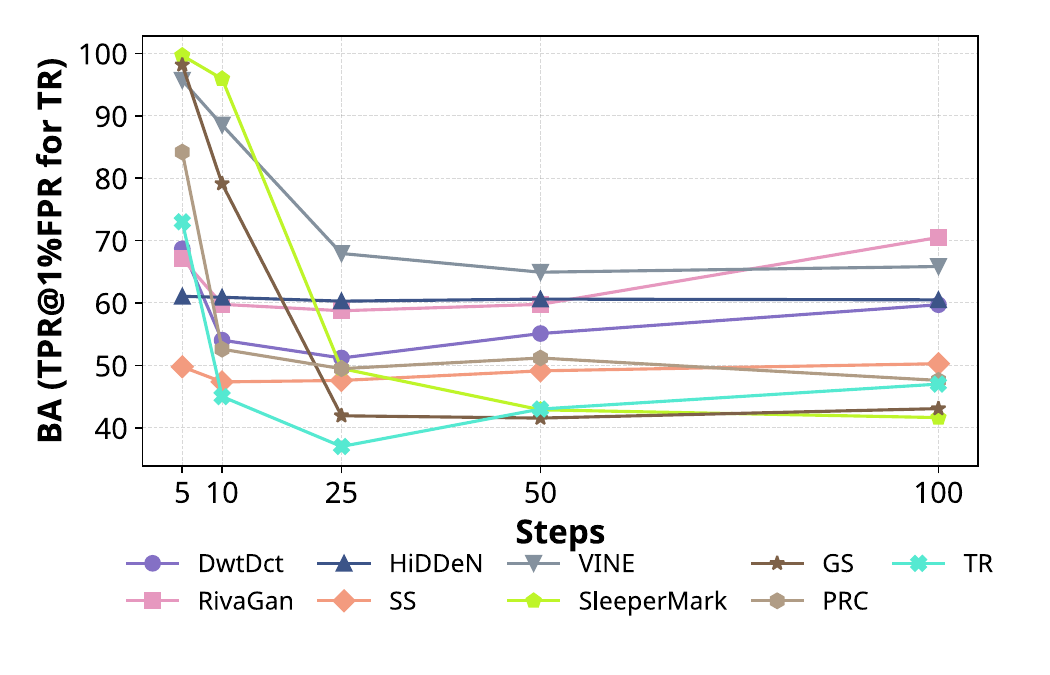}
        \caption{BA or TPR@1\%FPR}
        \label{fig:BA_steps}
    \end{subfigure}
    \hfill
    \caption{Effect of optimization steps ($N$) on attack strength and perceptual fidelity.}
    \label{fig:N_LPIPS_BA}
\end{figure}

\begin{figure}[h]
    \centering
    \captionsetup[subfigure]{skip=3pt, font=small} 

    \begin{subfigure}[b]{0.45\linewidth}
        \centering
        \includegraphics[width=\linewidth]{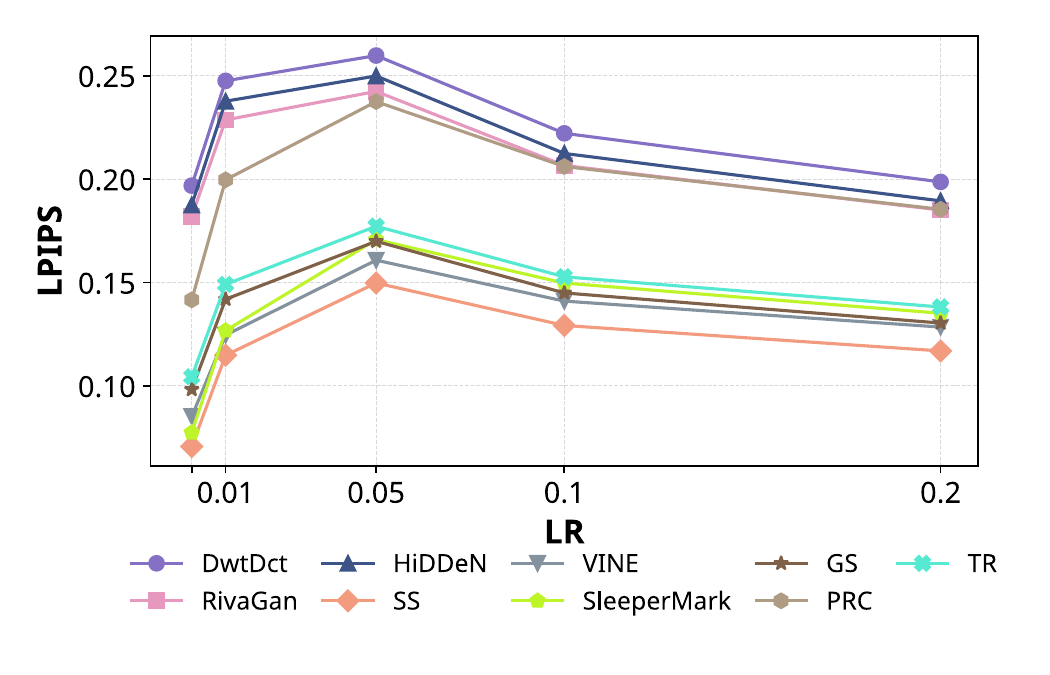}
        \caption{LPIPS}
        \label{fig:LPIPS_lr}
    \end{subfigure}
    \hfill
    \begin{subfigure}[b]{0.45\linewidth}
        \centering
        \includegraphics[width=\linewidth]{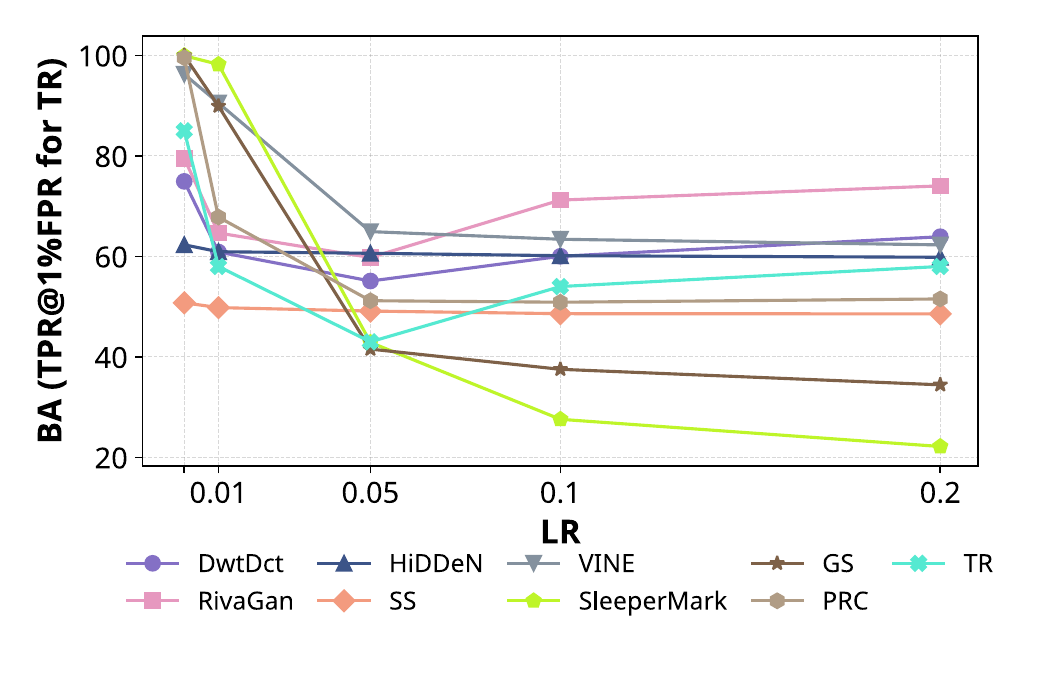}
        \caption{BA or TPR@1\%FPR}
        \label{fig:BA_lr}
    \end{subfigure}
    \hfill
    \caption{Effect of learning rate ($\kappa$) on attack strength and perceptual fidelity.}
    \label{fig:lr_LPIPS_BA}
\end{figure}


A similar trend emerges when adjusting the learning rate $\kappa$ with a fixed step count (see \Cref{fig:lr_LPIPS_BA}). Moderate learning rates often induce the most perceptual change, while larger $\kappa$ values lead to more stable solutions with comparable or improved attack performance. Sensitivity to $\kappa$ varies across watermarking schemes: some methods achieve significant BA reductions at moderate $\kappa$, while others require higher $\kappa$ to attain similar effectiveness. It is important to note that BA is not strictly \textit{``the lower, the better."} An excessively low BA often indicates that the attack drives the watermark detector’s extraction toward the bitwise complement $\overline{w}$, rather than erasing the watermark signal. In this case, the detector may still confidently report a watermark (e.g., SleeperMark in \cref{fig:BA_lr}). Therefore, for bit-string watermark removal, the most desirable outcome is a BA close to 50\%, implying that the extracted bit string is indistinguishable from random guessing and the watermark detectability is effectively neutralized.

Overall, these findings suggest that a moderate number of optimization steps ($N=50$) combined with a properly tuned learning rate ($\kappa=0.05$), as used in \Cref{sec:exp_res}, provides a favorable trade-off, achieving strong watermark removal with minimal perceptual degradation.

\subsection{\methodA} 
The WRN in \methodA adopts a hierarchical U-Net architecture based on Restormer~\cite{zamir2022restormer}\footnote{\url{https://github.com/leftthomas/Restormer}.}, featuring a four-level encoder-decoder with symmetric skip connections. Each block integrates MDTA for global context modeling and GDFN for local feature refinement. The encoder and decoder employ pixel-unshuffle and pixel-shuffle for downsampling and upsampling, respectively, with $1\times1$ convolutions for feature fusion. The network concludes with a refinement stage followed by a convolutional layer that predicts the residual image for high-fidelity restoration. We set $\lambda_5 = 1$, $\lambda_6 = 20$, $\lambda_7 = 1$ and $\lambda_8 = 4$ in \cref{eq:loss2}. The WRN is optimized for 100 epochs with a batch size of 7 using the Adam optimizer. The initial learning rate is set to $1 \times 10^{-3}$. A learning rate scheduler is applied, consisting of a linear warm-up during the first 10\% of the training epochs, followed by a cosine decay that gradually reduces the learning rate to $1 \times 10^{-6}$. The training set for the WRN consists of 2,000 images at a resolution of $512\times512$, generated by the proxy model SD1.5 using 2,000 randomly sampled prompts from the SDP dataset. We train the WRN on randomly cropped image patches of size $256 \times 256$.

\section{Attack Effectiveness Metrics}\label{sec:metric_detail}
\subsection{Watermark Detection}
The watermark length of each evaluated scheme is summarized in \Cref{tab:cap_thresh}. Let $w \in \{0,1\}^k$ be the embedded $k$-bit watermark and $\hat{w} \in \{0,1\}^k$ the bit string extracted from a query image. Detection is performed by comparing $\hat{w}$ with $w$ using the matching count $M$, defined as the complement of the Hamming distance:

\begin{equation}
M(w, \hat{w}) = \sum_{i=1}^{k} \mathbb{I}\left[w_i = \hat{w}_i\right],
\end{equation}
where $\mathbb{I}[\cdot]$ is the indicator function. A watermark is detected if the number of matches exceeds a predetermined threshold $\tau$, i.e., $M(w, \hat{w}) > \tau$.

Formally, the detection is formulated as a hypothesis testing. Under the null hypothesis $\mathcal{H}_0$, the image does not contain the target watermark $w$, and the extracted bits $\hat{w}=\{\hat{w}_i\}_{i=1}^k$ are modeled as i.i.d. Bernoulli variables. The False Positive Rate (FPR) can be defined as:
\begin{equation}
\mathbb{P}\left(M(w,\hat{w})>\tau \mid \mathcal{H}_{0}\right)= \frac{1}{2^{k}} \sum_{i=\tau+1}^{k} \binom{k}{i}
= I_{1/2}\left(\tau+1,\,k-\tau\right),
\end{equation}
which follows the tail probability of a binomial distribution and admits a closed-form expression via the regularized incomplete beta function $I_p(a,b)$.

The i.i.d.\ assumption and binomial detection model have been empirically validated in prior works~\cite{fernandez2023stable,lu2024robust}. Bit Accuracy (BA) is commonly used to measure watermark detectability and is defined as $\mathrm{BA} = M(w, \hat{w})/k$.

\begin{table}[h]
\centering
\caption{Watermark length and detection threshold of the evaluated methods.}
\label{tab:cap_thresh}
\renewcommand{\arraystretch}{0.92}
\begin{tabular}{lcccc}
\toprule
\textbf{Method}  & \textbf{Len.} (bits) & \textbf{BA Thres. $\tau/k$} [\%] & \textbf{FPR} \\
\midrule
DwtDct         & 48 &69 &0.01 \\
RivaGan                &32              &75 & 0.01 \\
HiDDeN                & 48              &69 & 0.01\\
SS                     & 48 &69 &0.01 \\
VINE                          & 96             & 62 & 0.01 \\
SleeperMark                      & 48              &69 & 0.01 \\
GS                             & 256               & 65 & $10^{-6}$\\
PRC                             & 512              & 60 &$10^{-5}$ \\
TR                       & 0 & - & 0.01 \\
\midrule
VideoShield &512    &60      &$10^{-5}$ \\
VideoMark &32       &75& 0.01 \\
\bottomrule
\end{tabular}
\end{table}

\subsection{CQS Image Quality Metric}
To provide a holistic assessment of the recovered image quality, we propose the CQS in \Cref{sec:exp_set}, which aggregates heterogeneous metrics (i.e., PSNR, SSIM, LPIPS, FID, and BRISQUE) into a unified score. The calculation is: 

\begin{equation}
\begin{aligned}
    \mathrm{CQS}&= \sum_{m} S_m, \\
S_m& =
\begin{cases}
\dfrac{v_m - v_m^{\min}}{v_m^{\max} - v_m^{\min}},
& m \in \{\mathrm{PSNR},\,\mathrm{SSIM}\}, \\
\dfrac{v_m^{\max} - v_m}{v_m^{\max} - v_m^{\min}},
& m \in \{\mathrm{LPIPS},\,\mathrm{FID},\,\mathrm{BRISQUE}\},
\end{cases}
\end{aligned}
\end{equation}
where $v_m$ denotes the value of metric $m$ and $v_m^{\min},v_m^{\max}$ are its min/max over the compared methods, yielding a per-metric score $S_m\in[0,1]$ and an aggregate $\mathrm{CQS}\in[0,5]$.

\section{Qualitative Comparison} \label{sec:quali_analy}

\Cref{fig:img_exp} qualitatively demonstrates the perceptual superiority of \method and \methodA over existing baselines. Our attacks produce images that are nearly indistinguishable from the originals. In the video domain, they further maintain temporal consistency without noticeable inter-frame flickering (\Cref{fig:video}). This high fidelity arises from on-manifold latent manipulation, which introduces structured, semantically coherent perturbations.

\begin{figure*}[t]
    \centering
    \includegraphics[width=0.9\linewidth]{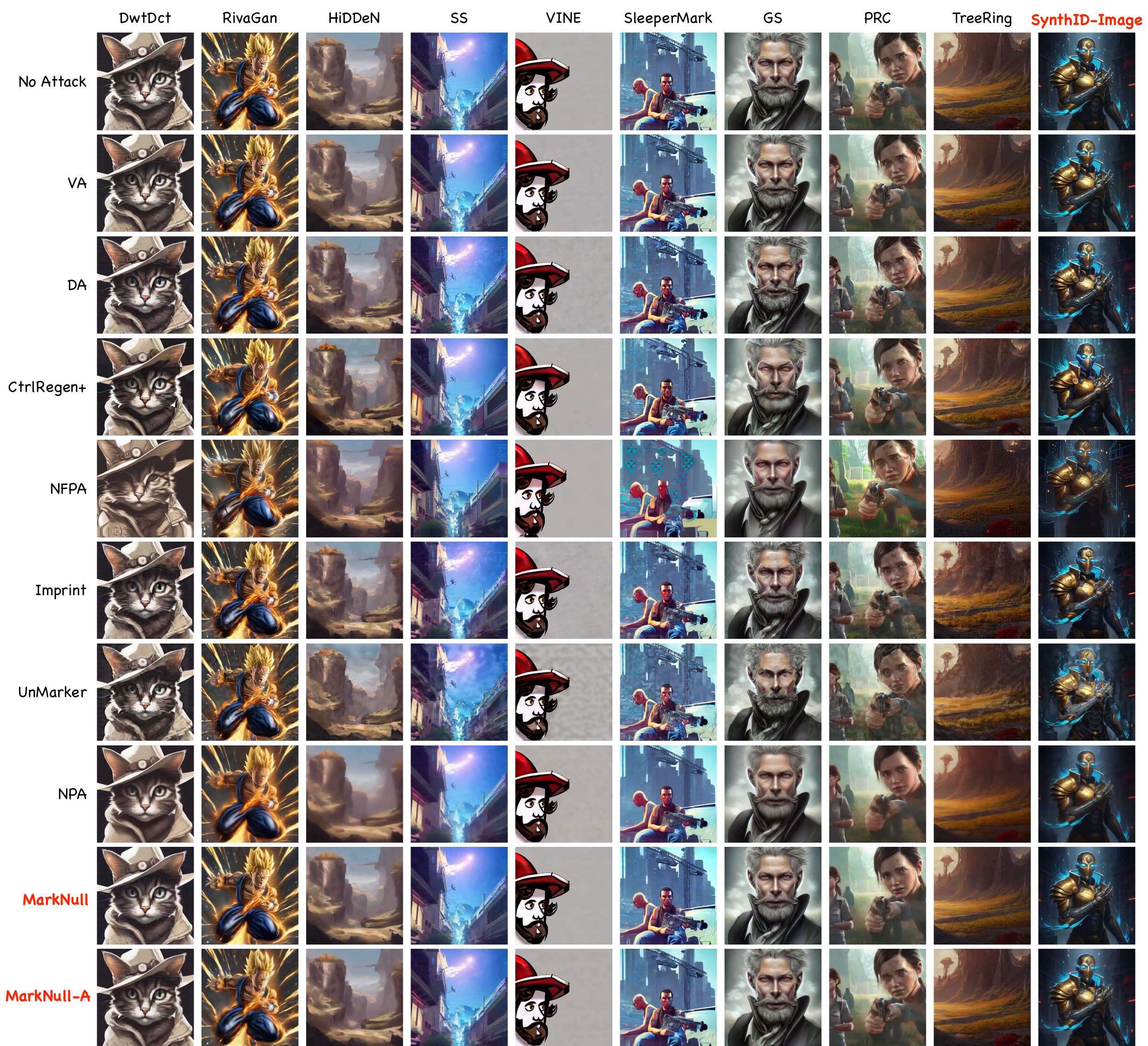}
    \caption{Qualitative examples of different attacks against baselines for watermark removal.}
    \label{fig:img_exp}
\end{figure*}

\begin{figure*}[t]
    \centering
    \includegraphics[width=0.9\linewidth]{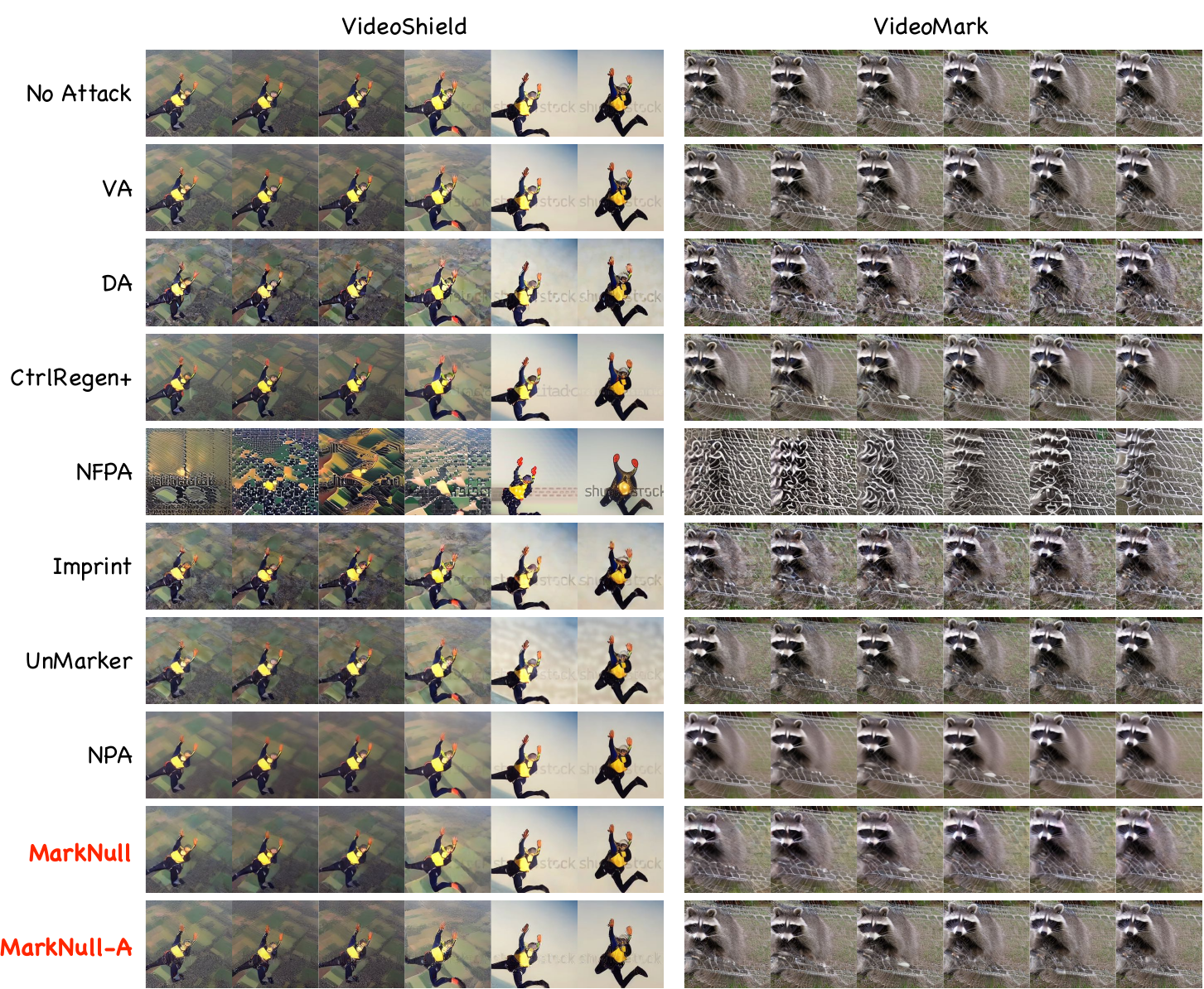}
    \caption{Qualitative examples of different attacks against VideoShield and VideoMark.}
    \label{fig:video}
\end{figure*}

\section{Discussion}
\subsection{Generalization Across Image Resolutions}

We discuss whether \method and \methodA can generalize across image resolutions. Although the main experiments are conducted at $512\times512$, the proposed attacks are not inherently resolution-specific. This is because \method operates in the latent space of latent diffusion models rather than directly in the pixel space. When the image resolution changes, the VAE encoder maps the image into a latent tensor with proportionally scaled spatial dimensions. The NLAS measures the normalized angular dependency between $z_0$ and $z_T$, and thus does not rely on a fixed image size. Therefore, the same formulation can be directly applied to images of different resolutions. \methodA's Restormer is also resolution-agnostic.

As shown in \cref{tab:size_generalization}, both \method and \methodA remain effective across different image sizes, reducing BA toward the random-guessing level while maintaining low LPIPS for representative schemes, including PRC~\cite{gunnPRC} and SS~\cite{fernandez2023stable}. This suggests that the attacks are not overfitted to the default $512\times512$ setting. We note that higher resolutions may increase the computational cost of iterative optimization for \method and may introduce mild distribution shifts for \methodA when the training and testing resolutions differ substantially. However, these factors mainly affect efficiency or small performance variations, rather than the validity of the attack itself. Overall, the designs of \method and \methodA provide favorable generalization across image resolutions.

\begin{table}[h]
\centering
\caption{Size generalization of \method and \methodA under three image resolutions}
\renewcommand{\arraystretch}{0.9}
\label{tab:size_generalization}
\resizebox{\linewidth}{!}{

\begin{tabular}{c c cc cc}

\toprule
\multirow{2}{*}{\textbf{Resolution}} 
& \multirow{2}{*}{\textbf{Method}} 
& \multicolumn{2}{c}{\textbf{\method}} 
& \multicolumn{2}{c}{\textbf{\methodA}} \\
\cmidrule(lr){3-4} \cmidrule(lr){5-6}
& & BA [\%] & LPIPS  
  & BA [\%] & LPIPS \\
\midrule
\multirow{2}{*}{$128 \times 128$} 
& PRC & 51.56 & 0.28 & 49.24 &  0.31 \\
& SS  & 38.09 & 0.23 & 46.20 &    0.29 \\
\midrule
\multirow{2}{*}{$256 \times 256$}
& PRC & 50.04 & 0.24 & 49.53 & 0.25 \\
 & SS  & 43.36 & 0.19 & 46.16 & 0.21 \\
\midrule
\multirow{2}{*}{$768 \times 768$} 
& PRC & 49.40 &0.20  & 49.00 & 0.23 \\
& SS  & 50.52 &0.14  & 46.68 & 0.23 \\
\bottomrule
\end{tabular}
}
\end{table}

\subsection{Transferability Across Diverse Backbones}

We further examine whether \method can transfer across different generative backbones. In the main experiments, we consider SD models as both the proxy and target backbones. A natural question is whether the attack remains effective when the proxy model is no longer from the SD family, while the target model is still SD-based.

To answer this question, we evaluate \textit{FLUX.1-dev}\footnote{\url{https://huggingface.co/black-forest-labs/FLUX.1-dev}} as the proxy model against SD2.1 as the target model. This setting introduces a substantial architectural divergence between the proxy and target models, making it more challenging than transfer within the same SD family. Nevertheless, \method remains effective under this setting, reducing BA to 56.95\% for GS, 52.65\% for PRC, and 46.67\% for SS. These results show that \method does not rely on a proxy model with the same architecture as the target model, and can transfer even when the proxy and target are built upon different generative backbones. We acknowledge that cross-backbone transfer incurs a non-negligible fidelity cost. Using \textit{FLUX.1-dev} as the proxy widens the proxy-target distribution gap relative to SD-family proxies, revealing a transferability–fidelity trade-off: \method remains effective across heterogeneous backbones, but stronger architectural mismatch can introduce more visible image changes.

This transferability stems from the fact that \method does not exploit model-specific implementation details. Instead, it suppresses watermark-related statistical dependencies through latent-space manipulation. Therefore, as long as the proxy model provides a generative prior that supports realistic image reconstruction and latent perturbation, the optimized perturbation can still generalize to the target watermarking pipeline. The FLUX.1-dev-to-SD2.1 results further confirm that \method captures a transferable vulnerability of current watermarking schemes, rather than overfitting to a particular SD backbone.

\section{Watermarking Baselines Configuration}\label{sec:wm_baseline}

\begin{itemize}
    \item DwtDct \cite{navas2008dwt}: DwtDct is a traditional \textit{post-hoc} watermarking method that embeds signals in the frequency domain combining DWT and DCT. We implement it via the \texttt{invisible-watermark} Python library \footnote{\url{https://github.com/ShieldMnt/invisible-watermark}}.
    
    \item RivaGan \cite{zhang2019robust}: RivaGAN is an encoder–decoder-based watermarking method proposed for video watermarking, which can be naturally extended to robust image watermarking. It employs an attention-based encoder-decoder to embed arbitrary payloads while maintaining visual fidelity. Its dual-discriminator design jointly enforces perceptual quality and watermark robustness through adversarial training. We implement it via the \texttt{invisible-watermark} Python library \footnote{\url{https://github.com/ShieldMnt/invisible-watermark}}.
    
    \item HiDDeN \cite{zhu2018hidden}: HiDDeN is an end-to-end trainable scheme for image watermarking, combining an encoder, decoder, and adversarial discriminator. It jointly optimizes imperceptibility, robustness, and payload capacity by minimizing reconstruction loss and adversarial detectability. To enhance robustness, it simulates differentiable distortions (e.g., blurring, cropping, JPEG) during training. In our implementation, we use the code and the pretrained model published in \footnote{\url{https://github.com/facebookresearch/stable_signature/tree/main/hidden}}.
    
    \item StableSignature (SS) \cite{fernandez2023stable}: SS is an in-generation watermarking method that embeds the watermark directly during the image synthesis process. It adopts the HiDDeN~\cite{zhu2018hidden} decoder as the watermark extractor and fine-tunes the VAE decoder of the LDM to ensure that the watermark remains decodable from the generated outputs. In our implementation, we follow the official code and utilize the pretrained model provided by the original authors \footnote{\url{https://github.com/facebookresearch/stable_signature}}.
    
    \item VINE \cite{lu2024robust}: VINE leverages a pretrained one-step diffusion model, SDXL-Turbo, as the watermark encoder. It employs a condition adaptor to fuse the watermark with the input image and introduces zero-convolution layers with skip connections to the VAE to ensure that the embedded watermark is both imperceptible and robust against advanced image editing. In our implementation, we follow the official code and utilize the pretrained model provided by the original authors \footnote{\url{https://github.com/Shilin-LU/VINE}}.
    
    \item SleeperMark \cite{wang2024sleepermark}: SleeperMark is designed to protect the copyright of T2I LDMs against unauthorized downstream fine-tuning. It injects a message-embedding backdoor into the diffusion backbone that is activated by a specific text trigger. For LDMs, the watermark is extracted from the latent space to achieve inherent robustness against image distortions with minimal perceptual impact. In our implementation, we follow the official code and utilize the pretrained model provided by the original authors \footnote{\url{https://github.com/taco-group/SleeperMark}}.
    
    \item Gaussian Shading (GS) \cite{yang2024gaussian}: GS is an initial-noised watermarking method that embeds a $k$-bit message by controlling the initial latent noise via distribution-preserving sampling. It first diffuses the binary watermark and then encrypts it with a stream key into a uniformly random bitstream, enabling the randomized watermark to drive a sampling procedure that produces a watermarked initial latent noise $z^w_T$ while preserving the standard Gaussian prior. The model then follows the standard denoising pipeline to obtain the final watermarked image.  For extraction, the suspect image is encoded back to the latent space, and DDIM inversion is used to estimate the corresponding $z^w_T$; the inverse sampling map recovers the encrypted bitstream, which is then decrypted and inverse-diffused, and a majority vote over replicated copies reconstructs the final watermark sequence. In our implementation, we follow the official code provided by the original authors \footnote{\url{https://github.com/bsmhmmlf/Gaussian-Shading}}.
    
    \item PRC \cite{gunnPRC}: PRC is an initial-noise-based watermarking scheme for diffusion models that embeds a secret key directly into the sign pattern of the initial noise to ensure undetectability. It generates a pseudorandom codeword and flips the signs of the sampled Gaussian vector to match this codeword while preserving the magnitude, thereby maintaining the marginal distribution of the latent. In our implementation, we follow the official code provided by the original authors \footnote{\url{https://github.com/XuandongZhao/PRC-Watermark}}.

    \item TreeRing \cite{wen2023tree}: TreeRing is the earliest initial-noise-based diffusion watermarking method, but it only supports zero-bit verification. Instead of encoding a binary watermark, TreeRing embeds a fixed ring watermark by modifying the initial noise with a predefined pattern. During verification, the suspect image is inverted via DDIM inversion to recover the corresponding initial noise, and the presence of the ring pattern is then tested in the recovered noise space. In our implementation, we follow the official code provided by the original authors \footnote{\url{https://github.com/YuxinWenRick/tree-ring-watermark}}.
  
\end{itemize}

\section{Attack Baselines Configuration} \label{sec:att_baseline}

\begin{itemize}
    \item Noise: Gaussian noise injection is a common additive distortion that perturbs watermark patterns by introducing pixel-wise random fluctuations. In our implementation, each image is first normalized to $[0,1]$, then i.i.d. Gaussian noise is added, where \texttt{std=0.05} controls the noise intensity. 
    
    \item Blur: Gaussian blur is a smoothing attack that degrades watermarks by attenuating high-frequency image components. This low-pass filtering suppresses fine textures and sharp edges, which can weaken watermark cues especially those encoded in subtle spatial details. In our implementation, we apply \texttt{cv2.GaussianBlur} to each input image with a kernel size of $5\times5$ and Gaussian standard deviation $\sigma=1$.
    
    \item Contrast: Contrast adjustment is a common photometric distortion that modifies the dynamic range of pixel intensities, thereby perturbing watermark cues while leaving the image geometry unchanged. In \Cref{tab:attack_vs_methods}, we apply \texttt{PIL.ImageEnhance.Contrast} with a contrast factor of 0.5, which reduces contrast by compressing intensity variations around the mean.
    
    \item JPEG: JPEG compression is a widely used lossy post-processing operation, controlled by a quality parameter that determines the compression strength. Lower quality values induce stronger quantization and larger information loss, which can further impair watermark recoverability. In \Cref{tab:attack_vs_methods}, we apply JPEG compression with the quality factor set to 80.
    
    \item Brightness: Brightness adjustment is a common photometric distortion that perturbs watermark signals by uniformly scaling image luminance while largely preserving geometric structure. In our implementation, we use \texttt{PIL.ImageEnhance.Brightness} to re-render each image with a brightness factor of 0.5, i.e., substantially darkening the image.
    
    \item VA \cite{an2024waves}: VA is a regeneration attack that removes watermarks by re-synthesizing the image through a VAE. Concretely, the watermarked image is first encoded into a latent representation by a neural VAE encoder and then decoded back to the pixel space by the corresponding neural decoder. In our experiment, we adopt \texttt{compressai.zoo.cheng2020\_anchor} to implement VA following \cite{zhao2024invisible}. The compression factor governs the attack strength, where smaller values impose a tighter bottleneck and thus induce more aggressive distortion. In this paper, we set the compression factor to $3$.

    \item DA \cite{zhao2024invisible}: DA is a regeneration attack that removes watermarks by perturbing the watermarked image with a forward noising process and then reconstructing it via the reverse denoising trajectory. Using more noising steps injects stronger corruption, which typically yields a higher watermark removal rate at the cost of greater perceptual deviation. In our implementation, we follow the official code provided by the original authors \footnote{\url{https://github.com/XuandongZhao/WatermarkAttacker}}. In our experiment, we instantiate DA with SD1.5 and set the number of noise steps to 60.

    \item CtrlRegen$+$ \cite{liu2024image}: CtrlRegen$+$ is an adjustable watermark removal method that leverages a controllable regeneration process to eliminate embedded signals. It proceeds by projecting the watermarked image into the latent space and introducing noise over a specified number of steps to produce a noisy latent representation. Subsequently, it employs a controllable diffusion model to denoise and reconstruct the image, utilizing semantic and spatial features extracted from the original input to ensure high visual fidelity and consistency. In our implementation, we follow the official code provided by the original authors \footnote{\url{https://github.com/yepengliu/CtrlRegen}}.
    
    \item NFPA \cite{qiufuture}: NFPA is a regeneration attack that removes watermarks by generating a new, semantically consistent image from the input via a \textit{next-frame prediction} style diffusion process. It first encodes the input image into the latent representation and applies DDIM inversion to obtain an inverted Latent as the generation starting point. In our implementation, we follow the official code provided by the original authors \footnote{\url{https://github.com/1249748036/NFPA/}}. 
    
    \item Imprint \cite{muller2025black}: Imprint is a removal attack for initial-noise-based watermarking, operating with a proxy model. It first encodes the watermarked image into the proxy model’s latent space and applies inversion to recover an estimate of the corresponding initial noise. It then optimizes a small perturbation to the latent representation via gradient-based updates, with an objective that drives the re-inverted noise away from the original estimate. The attack budget is governed by the number of optimization iterations: using more steps typically strengthens watermark suppression, but may also amplify perceptual distortion and increase computational overhead. In our implementation, we follow the official code provided by the original authors \footnote{\url{https://github.com/and-mill/semantic-forgery}}, use 50 optimization steps, and the proxy model is SD1.5.

    \item UnMarker \cite{kassis2025unmarker}: UnMarker is a universal black-box attack method without requiring detector feedback or specific knowledge of the watermarking algorithm. It operates on the insight that robust watermarks inevitably modify spectral amplitudes, and thus employs two novel adversarial optimization strategies to specifically disrupt these spectral features, effectively erasing the watermark signals while preserving image quality.  In our implementation, we follow the official code provided by the original authors \footnote{\url{https://github.com/andrekassis/ai-watermark}} and apply both adversarial strategies against all watermarking schemes.
    
    \item NPA \cite{goren2025noiseprints}: NoisePrints is a lightweight, distortion-free watermarking scheme originally. It also makes the simple observation that the initial noise can be highly correlated with the generated content. In the appendix of \cite{goren2025noiseprints}, the authors further introduce an attack, which we refer to as NPA. This attack uses a proxy model to compute the cosine similarity between the image and the recovered noise, and performs an iterative optimization that updates a small perturbation in the latent space. The objective is to drive the adversarial latent representation away from the initial noise by minimizing its cosine similarity. Since the authors do not release official code, we implement the attack based on the paper description; in our setting, the number of optimization steps is set to $100$.  
\end{itemize}

\end{document}